\theoremstyle{definition}
\newtheorem{definition}{Definition}
\newtheorem{example}{Example}
\theoremstyle{theorem}
\newtheorem{theorem}{Theorem}
\newtheorem{lemma}{Lemma}
\newcommand{\refl}{\mathsf{refl}}
\newcommand{\id}{\mathsf{id}}
\newcommand{\Con}{\mathsf{Con}}
\newcommand{\Sub}{\mathsf{Sub}}
\newcommand{\Tm}{\mathsf{Tm}}
\newcommand{\Ty}{\mathsf{Ty}}
\newcommand{\U}{\mathsf{U}}
\newcommand{\El}{\mathsf{El}}
\newcommand{\Id}{\mathsf{Id}}
\newcommand{\proj}{\mathsf{proj}}
\renewcommand{\tt}{\mathsf{tt}}
\newcommand{\blank}{\mathord{\hspace{1pt}\text{--}\hspace{1pt}}}
\newcommand{\ra}{\rightarrow}
\newcommand{\Set}{\mathsf{Set}}
\newcommand{\ToS}{\mathsf{ToS}}
\newcommand{\ext}{\triangleright}
\newcommand{\emptycon}{\scaleobj{.75}\bullet}
\newcommand{\Pii}{\Pi}
\newcommand{\appi}{\mathsf{app}}
\newcommand{\lami}{\mathsf{lam}}
\newcommand{\Pie}{\Pi^{\mathsf{ext}}}
\newcommand{\appe}{\mathsf{app^{ext}}}
\newcommand{\lame}{\mathsf{lam^{ext}}}
\newcommand{\Piinf}{\Pi^{\mathsf{inf}}}
\newcommand{\appinf}{\mathsf{app^{inf}}}
\newcommand{\laminf}{\mathsf{lam^{inf}}}
\newcommand{\appitt}{\mathop{{\scriptstyle @}}}
\newcommand{\Refl}{\mathsf{Refl}}
\newcommand{\Sig}{\mathsf{Sig}}
\newcommand{\ToSSig}{\mathsf{ToSSig}}
\newcommand{\Inductive}{\mathsf{Inductive}}
\newcommand{\Subtype}{\mathsf{Subtype}}
\newcommand{\subtype}{\mathsf{subtype}}
\newcommand{\NatSig}{\mathsf{NatSig}}
\newcommand{\mi}[1]{\mathit{#1}}
\newcommand{\Sg}{\Sigma}
\newcommand{\flCwF}{\mathsf{flCwF}}
\newcommand{\Kfam}{\mathsf{K}}
\newcommand{\p}{\mathsf{p}}
\newcommand{\q}{\mathsf{q}}
\newcommand{\K}{\mathsf{K}}
\newcommand{\lamK}{\mathsf{lam}^{\K}}
\newcommand{\appK}{\mathsf{app}^{\K}}
\newcommand{\A}{\mathsf{A}}
\newcommand{\D}{\mathsf{D}}
\renewcommand{\S}{\mathsf{S}}
\newcommand{\arri}{\Rightarrow}
\newcommand{\arre}{\Rightarrow^{\mathsf{ext}}}
\newcommand{\arrinf}{\Rightarrow^{\mathsf{inf}}}
\newcommand{\syn}{\mathsf{syn}}
\newcommand{\SynSig}{\mathsf{SynSig}}
\newcommand{\bCon}{\boldsymbol{\Con}}
\newcommand{\bTy}{\boldsymbol{\Ty}}
\newcommand{\bSub}{\boldsymbol{\Sub}}
\newcommand{\bTm}{\boldsymbol{\Tm}}
\newcommand{\bGamma}{\boldsymbol{\Gamma}}
\newcommand{\bDelta}{\boldsymbol{\Delta}}
\newcommand{\bsigma}{\boldsymbol{\sigma}}
\newcommand{\bdelta}{\boldsymbol{\delta}}
\newcommand{\bepsilon}{\boldsymbol{\epsilon}}
\newcommand{\bt}{\boldsymbol{t}}
\newcommand{\bu}{\boldsymbol{u}}
\newcommand{\bA}{\boldsymbol{A}}
\newcommand{\ba}{\boldsymbol{a}}
\newcommand{\bb}{\boldsymbol{b}}
\newcommand{\bB}{\boldsymbol{B}}
\newcommand{\bid}{\boldsymbol{\id}}
\newcommand{\bemptycon}{\boldsymbol{\emptycon}}
\newcommand{\bSet}{\boldsymbol{\Set}}
\newcommand{\bU}{\boldsymbol{\U}}
\newcommand{\bEl}{\boldsymbol{\El}}
\newcommand{\bPii}{\boldsymbol{\Pi}}
\newcommand{\bPie}{\boldsymbol{\Pie}}
\newcommand{\bPiinf}{\boldsymbol{\Piinf}}
\newcommand{\bId}{\boldsymbol{\Id}}
\newcommand{\bM}{\boldsymbol{\mathsf{M}}}
\newcommand{\bT}{\boldsymbol{\mathsf{T}}}
\newcommand{\bS}{\boldsymbol{\mathsf{S}}}
\newcommand{\bD}{\boldsymbol{\mathsf{D}}}
\newcommand{\bI}{\boldsymbol{\mathsf{I}}}
\newcommand{\ul}[1]{\underline{#1}}
\newcommand{\ulGamma}{\ul{\Gamma}}
\newcommand{\ulDelta}{\ul{\Delta}}
\newcommand{\ulgamma}{\ul{\gamma}}
\newcommand{\ulOmega}{\ul{\Omega}}
\newcommand{\uldelta}{\ul{\delta}}
\newcommand{\ulsigma}{\ul{\sigma}}
\newcommand{\ulnu}{\ul{\nu}}
\newcommand{\ult}{\ul{t}}
\newcommand{\ulu}{\ul{u}}
\newcommand{\ulA}{\ul{A}}
\newcommand{\ula}{\ul{a}}
\newcommand{\ulB}{\ul{B}}
\newcommand{\coe}{\mathsf{coe}}
\newcommand{\coh}{\mathsf{coh}}
\newcommand{\llb}{\llbracket}
\newcommand{\rrb}{\rrbracket}
\title{Large and Infinitary Quotient Inductive-Inductive Types}
\author{Andr{\'a}s Kov{\'a}cs}
\author{Ambrus Kaposi}
\affil{E\"otv\"os Lor\'and University}
\begin{document}

\maketitle

\begin{abstract}
Quotient inductive-inductive types (QIITs) are generalized inductive types which
allow sorts to be indexed over previously declared sorts, and allow usage of
equality constructors. QIITs are especially useful for algebraic descriptions of
type theories and constructive definitions of real, ordinal and surreal
numbers. We develop new metatheory for large QIITs, large elimination, recursive
equations and infinitary constructors. As in prior work, we describe QIITs using
a type theory where each context represents a QIIT signature. However, in our
case the theory of signatures can also describe its own signature, modulo
universe sizes. We bootstrap the model theory of signatures using
self-description and a Church-coded notion of signature, without using
complicated raw syntax or assuming an existing internal QIIT of signatures. We
give semantics to described QIITs by modeling each signature as a finitely
complete CwF (category with families) of algebras. Compared to the case of
finitary QIITs, we additionally need to show invariance under algebra
isomorphisms in the semantics. We do this by modeling signature types as
isofibrations. Finally, we show by a term model construction that every QIIT is
constructible from the syntax of the theory of signatures.
\end{abstract}

\section{Introduction}

The aim of this work is to provide theoretical underpinning to a general notion
of inductive types, called quotient inductive-inductive types (QIITs). QIITs are
of interest because there are many commonly used mathematical structures, which
can be conveniently described as QIITs in type theory, but cannot be defined as
less general inductive types, or doing so incurs large encoding overhead.

Categories are a prime example for a structure which is described by a
quotient inductive-inductive signature. Signatures for QIITs allow
having multiple sorts, with later ones indexed over previous ones, and
equations as well. We need both features in order to write down the
following signature of categories.
\begingroup
\allowdisplaybreaks
\begin{alignat*}{3}
  & \mi{Ob}  && : \Set \\
  & \mi{Mor} && : \mi{Ob}\ra \mi{Ob}\ra\Set \\
  & \blank\circ\blank && : \mi{Mor}\,J\,K\ra \mi{Mor}\,I\,J\ra \mi{Mor}\,I\,K \\
  & \mi{id} && : \mi{Mor}\,I\,I \\
  & \mi{ass} && : (f\circ g)\circ h = f\circ(g\circ h) \\
  & \mi{idl} && : \mi{id}\circ f = f \\
  & \mi{idr} && : f\circ \mi{id} = f
\end{alignat*}
\endgroup
The benefit of having a QII signature is getting a model theory ``for free'',
from the metatheory of QIITs. This model theory includes a category of algebras
which has an initial object and also some additional structure. For the
signature of categories, we get the empty category as the initial object, but it
is common to consider categories with more structure, which have more
interesting initial models.

Algebraic notions of models of type theories are examples for this. Here,
initial models represent syntax, and initiality corresponds to induction on
syntax. Several variants have been used, from contextual categories \cite{gat}
and comprehension categories \cite{jacobs1993comprehension} to categories with
families \cite{internaltt} (CwF).

A prime motivation of the current work is to further develop QIITs as a
framework for the metatheory of type theories, to cover more theories and
support more applications. To this end, we extend the syntax and semantics of
QIITs as previously described in the literature \cite{kaposi2019constructing,
  qiits, dijkstra2017quotient}, with the following features.

\begin{enumerate}
  \item
  \textbf{Large constructors, large elimination} and algebras at different
  universe levels. This fills in an important formal gap; large models are
  routinely used in the metatheory of type theories, but they have not been
  presented explicitly in previous QIIT literature. For example, interpreting
  syntactic contexts as sets already requires a notion of large models.
  \item
  \textbf{Infinitary constructors}.
  This allows specification of infinitely branching trees. Examples of
  infinitary QIITs in previous works include real, surreal numbers
  \cite{HoTTbook}, ordinal numbers \cite{lumsdaineShulman} and a partiality
  monad \cite{partiality}. Of special note here is that the theory of QIIT
  signatures is itself large and infinitary, thus it can ``eat itself'',
  i.e.\ include its own signature and provide its own metatheory. This was not
  possible previously in \cite{kaposi2019constructing}, where only finitary
  QIITs were described. In this paper we use self-representation to bootstrap
  the model theory of signatures, without having to assume any pre-existing
  internal syntax.
  \item
  \textbf{Recursive equations}, i.e. equations appearing as assumptions of
  constructors. These have occurred previously in syntaxes of cubical type
  theories, as boundary conditions \cite{cchm, angiuli2016computational,
    angiuli2018cartesian}.
\end{enumerate}

To provide semantics, we show that for each signature, there is a CwF (category
with families) of algebras, extended with $\Sigma$-types, extensional identity,
and constant families. This additional structure corresponds to a type-theoretic
flavor of finite limits, as it was shown in \cite{clairambault2014biequivalence}
that the category of such CwFs is biequivalent to the category of finitely
complete categories.

Compared to the case of finitary QIITs, the addition of infinitary constructors
and recursive equations requires a significant change in semantics: instead of
strict CwF morphisms, we need to consider weak ones, and instead of modeling
types as displayed CwFs, we need to model them as CwF isofibrations. The latter
amounts to showing that signature extension respects algebra isomorphisms.

We also show, by a term model construction, that all QIITs are reducible to the
syntax of signatures. This construction also essentially relies on invariance under
isomorphisms.

\subsection{Outline of the Paper}

In Section \ref{sec:metatheory}, we describe the metatheory used in the rest of
the paper. In Section \ref{sec:tos}, we introduce the theory of QIIT
signatures. In Section \ref{sec:semantics} we give categorical semantics to
signatures. In Section \ref{sec:tossig} we build model theory for the theory
of QIIT signatures. In Section \ref{sec:termmodels} we give a term model construction
of QIITs. We discuss related work and conclude in Sections \ref{sec:relatedwork}-\ref{sec:conclusion}.

\section{Metatheory}
\label{sec:metatheory}

The metatheory used in this paper is extensional type theory, extended with a
form of cumulativity and an external notion of universe polymorphism. We refer
to this theory as cETT. We review the used features and notations
in the following.

\subsection{Core Extensional Theory}

We have Russell-style predicative universes $\Set_i$ indexed by natural numbers,
dependent functions as $(x : A)\ra B$, and dependent pairs as $(x : A)\times
B$. We sometimes leave parameters implicit in dependent function types,
e.g.\ write $\id : A \ra A$ instead of $\id : (A : \Set_i)\ra A \ra A$. We also
use subscripts as a field projection notation for iterated pairs. For example,
for $t : (A : \Set_i) \times (B : \Set_i) \times (f : A \ra B)$, we use $B_t$ to
denote the projection of the second component. Sometimes we omit the subscript
if it is clear from context. When we write ``exists'' in this paper, we always
mean chosen structure given by a $\Sigma$-type.

Both for function types and $\Sigma$, the output universe level is given as the
maximum of the levels of the constituent types, e.g.\ $(x : A)\ra B : \Set_{\max(i,j)}$
when $A : \Set_i$ and $B : \Set_j$.

We write propositional equality as $t = u$, with $\refl_t$ for reflexivity. We
have equality reflection and uniqueness of identity proofs (UIP). The unit type
is $\top : \Set_0$, with inhabitant $\tt$.

\subsection{Cumulativity}
\label{sec:cumulativity}

We use cumulative universes and cumulative subtyping as described
in \cite{timany2018cumulative}. Concretely, we have a $\blank\leq\blank$ subtyping
relation on types, specified by the following rules:
\begin{mathpar}
  \inferrule*{i \leq j}
             {\Gamma \vdash \Set_i \leq \Set_j}

  \inferrule*{\Gamma,\,x : A \vdash B \leq B'}
             {\Gamma \vdash (x : A)\ra B \leq (x : A)\ra B'}

  \inferrule*{\Gamma\vdash A \leq A' \\ \Gamma,\,x : A \vdash B \leq B'}
             {\Gamma \vdash (x : A)\times B \leq (x : A') \times B'}

  \inferrule*{\\}
             {\Gamma \vdash A \leq A}

  \inferrule*{\Gamma \vdash A \leq B \\ \Gamma\vdash B \leq C}
             {\Gamma \vdash A \leq C}

  \inferrule*{\Gamma\vdash A \leq A' \\ \Gamma\vdash t : A}
             {\Gamma \vdash t : A'}
\end{mathpar}

Additionally, we have an internal $\Subtype$ type, which internalizes subtyping,
analogously to how $\blank=\blank$ internalizes definitional equality. Hence, we
have analogous reflection and uniqueness rules.
\begin{mathpar}
  \inferrule*{\Gamma \vdash A : \Set_i \\ \Gamma \vdash B : \Set_j}
             {\Gamma \vdash \Subtype\,A\,B : \Set_{\max(i, j)}}

  \inferrule*{\Gamma \vdash A \leq B}
             {\Gamma \vdash \subtype : \Subtype\,A\,B}

  \inferrule*{\Gamma \vdash t : \Subtype\,A\,B}
             {\Gamma \vdash A \leq B}

  \inferrule*{\Gamma \vdash t : \Subtype\,A\,B \\ \Gamma \vdash u : \Subtype\,A\,B}
             {\Gamma \vdash t \equiv u}
\end{mathpar}

We use cumulativity to reduce bureaucratic overhead when dealing with
constructions at different universe levels. The internal $\Subtype$ is used in
Section \ref{sec:termmodels} to prove cumulativity for general QIIT
algebras. For example, consider natural number algebras at level $i$, given as
the $\Sigma$-type $\mi{NatAlg}_i :=(\mi{Nat} : \Set_i)\times
\mi{Nat}\times(\mi{Nat} \ra \mi{Nat})$. It follows from the subtyping rules that
$i \leq j$ implies $NatAlg_{i} \leq \mi{NatAlg}_{j}$. However, cumulativity for
arbitrary QIIT algebras does not follow judgmentally; it can only be proven by
induction on signatures, hence the need for $\Subtype$.

Internal subtyping is not included in \cite{timany2018cumulative}, but it can be
justified by the set-theoretic model given there.

\subsection{Universe Polymorphism}
\label{sec:universe_polymorphism}

We need to talk about constructions at arbitrary universe levels. For the sake
of simplicity, we do not assume a notion of universe polymorphism in cETT,
instead we quantify over levels in an unspecified theory outside of cETT. Hence,
a universe polymorphic cETT term is understood as a $\mathbb{N}$-indexed family
of cETT terms. We reuse the notation of cETT functions for
universe polymorphism, e.g.\ as in the following function:
\[
\lambda\,i.\,\Set_i : (i : \mathbb{N})\ra \Set_{i+1}
\]

\section{QIIT Signatures}
\label{sec:tos}

Signatures are given as contexts in a certain type theory, the theory of
signatures. We shall abbreviate it as ToS. However, ToS turns out to be a large
infinitary QIIT itself, and we would like to define ToS and a notion of
signature without referring to QIITs, only using features present in cETT.

In previous works by Cartmell \cite{gat} and Sterling
\cite{sterling2019algebraic}, signatures for generalized algebraic theories are
defined using raw syntax together with well-formedness relations. In this way,
signatures can be specified without already assuming the existence of GATs or
QIITs. However, raw syntax is notoriously difficult to work with, and we prefer
to avoid it altogether.

At this point, we do not actually need \emph{syntactic} signatures, which make
it possible to do induction on signatures. We only need a way to write down
well-formed signatures, and interpret them in arbitrary models of ToS. For this,
a weak Church-like encoding suffices, where a signature is given as a typing
context in an arbitrary model of ToS. For this, we first need to specify the
notion of ToS models. However, this is the \emph{only} piece of information
about ToS which we need to manually provide. Other concepts such as
homomorphisms of ToS models and ToS-induction, will be derived from the
semantics of signatures and self-description in Section \ref{sec:tossig}.

\begin{definition}[Notion of model for the theory of signatures]\label{def:tos}
For levels $i$ and $j$, $\ToS_{i,j} : \Set_{\max(i+1,\,j+1)}$ is a cETT type whose
elements are ToS models (or ToS-algebras). $\ToS_{i,j}$ is an iterated
$\Sigma$-type, containing all of the following components.

A \textbf{category with families} (CwF), where all four underlying sets (of
objects, morphisms, types and terms) are in $\Set_i$. Following notation in
\cite{kaposi2019constructing}, we denote these respectively as $\Con : \Set_i$,
$\Sub : \Con \ra \Con \ra \Set_i$, $\Ty : \Con \ra \Set_i$ and $\Tm : (\Gamma :
\Con) \ra \Ty\,\Gamma \ra \Set_i$. We use $\id$ and $\blank\circ\blank$ to
denote identity and composition for substitution. We denote the empty context as
$\emptycon : \Con$, and the unique substitution into the empty context as
$\epsilon : \Sub\,\Gamma\,\emptycon$. Context extension is $\blank\ext\blank :
(\Gamma : \Con)\ra \Ty\,\Gamma \ra \Con$. Substitution on types and terms is
written as $\blank[\blank]$. Projections are noted as $\p :
\Sub\,(\Gamma\,\ext\,A)\,\Gamma$ and $\q : \Tm\,(\Gamma\,\ext\,A)\,(A[\p])$, and
substitution extension is $\blank,\blank : (\sigma : \Sub\,\Gamma\,\Delta)\ra
\Tm\,\Gamma\,(A[\sigma])\ra \Sub\,\Gamma\,(\Delta\ext A)$.

A \textbf{universe} $\U : \Ty\,\Gamma$ with decoding $\El : (a :
\Tm\,\Gamma\,\U) \ra \Ty\,\Gamma$.

\textbf{Inductive function space} $\Pii : (a : \Tm\,\Gamma\,\U) \ra
\Ty\,(\Gamma\,\ext\,\El\,a) \ra \Ty\,\Gamma$, with application as $\appi :
\Tm\,\Gamma\,(\Pii\,a\,B)\ra \Tm\,(\Gamma\,\ext\,\El\,a)\,B$ and its
inverse $\lami$.

\textbf{External function space} $\Pie : (A : \Set_j)\ra(A \ra \Ty\,\Gamma)\ra
\Ty\,\Gamma$, with $\appe : \Tm\,\Gamma\,(\Pie\,A\,B)\ra((x : A)\ra
\Tm\,\Gamma\,(B\,x))$ and its inverse $\lame$.

\textbf{Infinitary function space} $\Piinf : (A : \Set_j)\ra(A \ra
\Tm\,\Gamma\,\U)\ra \Tm\,\Gamma\,\U$, with $\appinf :
\Tm\,\Gamma\,(\El\,(\Piinf\,A\,b))\ra((x : A)\ra \Tm\,\Gamma\,(\El\,(b\,x)))$ and
its inverse $\laminf$.

An \textbf{identity type} $\Id : (a : \Tm\,\Gamma\,\U)\ra
\Tm\,\Gamma\,(\El\,a)\ra\Tm\,\Gamma\,(\El\,a)\ra \Tm\,\Gamma\,\U$, with $\Refl :
(t : \Tm\,\Gamma\,(\El\,a))\ra \Tm\,\Gamma\,(\El\,(\Id\,a\,t\,t))$, equality reflection and UIP.

\end{definition}
In the above listing, we omit equations for substitution and
$\beta\eta$-conversion, but these should be understood to be also part of
$\ToS_{i,j}$.

\emph{Notational conventions.} We name elements of $\Con$ as $\Gamma$, $\Delta$,
$\Theta$, elements of $\Sub\,\Gamma\,\Delta$ as $\sigma$, $\delta$, $\nu$,
elements of $\Ty\,\Gamma$ as $A$, $B$, $C$, and elements of $\Tm\,\Gamma\,A$ as
$t$, $u$, $v$. CwF components by default support de Bruijn indices, which are
not easily readable. We use instead a nameful notation for binders in context
extension, $\Pii$ and $\lami$, e.g.\ as $(\emptycon\,\ext (a : \U) \ext (t :
\El\,a))$. We also define a type-theoretic flavor of $\appi$ for convenience:
\begin{alignat*}{3}
  & \blank\appitt\blank && :
      \Tm\,\Gamma\,(\Pii\,a\,B)\ra
      (u : \Tm\,\Gamma\,(\El\,a)) \ra \Tm\,\Gamma\,(B[\id, u])\\
  & t\appitt u && := (\appi\,t)[\id, u]
\end{alignat*}
We abbreviate non-dependent inductive $\Pii$ as $\blank\arri\blank$, and
likewise we use $\blank\arre\blank$ and $\blank\arrinf\blank$ for non-dependent
external and infinitary functions.

%% We use abbreviations for non-dependent function types:
%% \begin{alignat*}{3}
%%   & A\arri B && := \Pi\,A\,(B[\p])\\
%%   & A \arre\,B &&:= \Pie\,A\,(\lambda\,\_.\,B) \\
%%   & A \arrinf\,B &&:= \Piinf\,A\,(\lambda\,\_.\,B)
%% $A\arri B := \Pi\,A\,(B[\p])$, and $A \arre\,B := \Pie\,A\,(\lambda\,\_.\,B)$, and likewise
%% $A \arrinf\,B := \Piinf\,A\,(\lambda\,\_.\,B)$.

\begin{definition}[Notion of signature]\label{def:signature}
A QIIT signature at level $j$ is a context in an arbitrary $M : \ToS_{i,j}$
model. We define the type of such signatures as follows:
\[
  \Sig_j := (i : \mathbb{N})\ra(M : \ToS_{i,j})\ra \Con_M
\]

Here, $j$ refers to the level of external types appearing in the signature, in
the domains of $\Pie$ and $\Piinf$ functions, while the quantified $i$ level is
required to allow interpreting a signature in arbitrary-sized ToS models. Note
that $\Sig_j$ is universe-polymorphic, so it is a family of cETT types and it is
not in any cETT universe.

\begin{example}{
    Signature for natural numbers. Here, no external types appear, so the level
    can be chosen as $0$.}
\begin{alignat*}{3}
  & \NatSig : \Sig_0 \\
  & \NatSig := \lambda (i :\mathbb{N})(M : \ToS_{i,0}).\\
  & \hspace{1em}(\emptycon_M\,\ext_M\, (N : \U_M) \,\ext_M\,(\mi{zero} : \El_M\,N)\\
  & \hspace{2.67em}\ext_M (\mi{suc} : N\arri_M\El_M\,N))
\end{alignat*}
\end{example}

With this, we are able to specify QIITs, and we can also interpret each
signature in an arbitrary ToS model, by applying a signature to a model.
$\Sig_j$ can be viewed as a precursor to a Church-encoding for the theory of
signatures, but we only need contexts encoded in this way, and not other ToS
components. In functional programming, this representation is sometimes called
``finally tagless'' \cite{carette2007finally}, and it is used for defining and
interpreting embedded languages.
\end{definition}

In the following examples, we leave the abstracted $M : \ToS_{i,j}$ implicit.

\begin{example}{Infinitary constructors}. The universe $\U$ is closed under
the $\Piinf$ function type, which allows such functions to appear in the domains
of $\Pii$ types. This allows, for example, a signature for trees branching with
arbitrary small types. This is a signature at level 1, since we have $\Set_0$ as
a $\Pie$ domain type.
\begin{alignat*}{3}
& \mathsf{TreeSig} := \\
& \hspace{1em}\emptycon \ext (\mi{Tree} : \U)\\
& \hspace{1.35em}\ext (\mi{node} : \Pie\,\Set_0\,(\lambda A.\,(A\,\arrinf\,\mi{Tree})\arri\El\,\mi{Tree}))
\end{alignat*}
\end{example}

\begin{example}{Recursive equations}. Again, the universe is closed under $\Id$, which allows us to write equations
in $\Pii$ domains. A minimal (and trivial) example:
\begin{alignat*}{3}
  & \mathsf{RecEqSig} :=\\
  & \hspace{0.5em}\emptycon\,\ext (A : \U) \ext (a : \El\,A)\ext (f : \Pii\,(x : A)\,(\Id\,A\,x\,a\arri\El\,A))
\end{alignat*}
More interesting (and complicated) examples for recursive equations are boundary
conditions in various cubical type theories \cite{cchm,
  angiuli2016computational, angiuli2018cartesian}. Note that our $\Id$ allows
iterated equations as well, but these are all trivial in the semantics, where we
assume UIP.
\end{example}

\emph{Remark.} Since signatures are parametrized by a single universe level, all
external types in constructors must be contained in the same $\Set_j$
universe. We opted for this setup for the sake of simplicity. Cumulativity helps
here: it allows us to pick a $j$ level which is large enough to accommodate all
external types in a signature.

\section{Semantics}
\label{sec:semantics}

\subsection{Overview}\label{sec:overview}

For each signature, we would like to have at least
\begin{enumerate}
  \item A category of algebras, with homomorphisms as morphisms.
  \item A notion of induction, which requires a notion of dependent algebras.
  \item A proof that for algebras, initiality is equivalent to supporting induction.
\end{enumerate}

Following \cite{kaposi2019constructing}, we do this by creating a model of ToS,
where contexts are categories supporting the above requirements and
substitutions are appropriate structure-preserving functors. Then, each signature can be applied to this model, yielding an interpretation of the signature
as a structured category of algebras.

Our semantics has a ``type-theoretic'' flavor, which is inspired by the cubical
set model of Martin-Löf type theory by Bezem et al. \cite{cubicalmodel}. The core idea
is to avoid strictness issues by starting from basic ingredients which are already
strict enough. Hence, instead of modeling types as certain slices and
substitution by pullback, we model types as displayed categories with extra
structure, which naturally support strict reindexing.

We make a similar choice in the interpretation of signatures themselves: we use
structured CwFs instead of lex categories. The reason here is that CwFs allow us
to compute induction principles in strictly the same way as one would write in
type theory, since we have $\Ty$ and $\Tm$ for a primitive notion of dependent
objects and morphisms. In contrast, dependent objects in lex categories is a
derived notion, and the induction principles we get are only up to
isomorphism. This issue is perhaps not relevant from a purely categorical
perspective, but we are concerned with eventually implementing QIITs in proof
assistants, so we prefer if our semantics computes strictly. This was
demonstrated previously in \cite{hiit}, where we provided a program which
computed types of induction principles from signatures of higher
inductive-inductive types, and we believe that the same could be achieved for
the signatures and semantics described in this paper.

In the following, for given $i$ and $j$ levels, we define a model $\bM_{i,j} :
\ToS_{\max(i+1,j)+1, j}$ such that $\Con_{\bM_{i,j}}$ is a type of structured
categories (of algebras). The level $i$ marks the level of all internal sorts in
an algebra, and the level $j$ marks the level of all external sets in function
domains. Hence, every algebra has level $\max(i+1,j)$. The bump is only needed
for $i$, since algebras merely contain elements of $A : \Set_j$ types, while
inductive sets are themselves elements of $\Set_i$. For example, $\mi{NatAlg}_i :
\Set_{\max(i+1,0)} : \Set_{\max(i+1, 0)+1}$.

We present the components of the model in order. In the following, we use
\textbf{bold} font to disambiguate components of $\bM_{i,j}$ from components of other
structures. For example, we use $\boldsymbol{\sigma : \Sub\,\Gamma\,\Delta}$ to
denote a substitution in $\bM_{i,j}$.

The model involves a large amount of technical detail; we omit a significant
part of this, and only present the most salient parts.

\subsection{Contexts}

We define $\bCon : \Set_{\max(i+1,j)+1}$ as $\flCwF_{\max(i+1,j)}$.

\begin{definition}[Finite limit CwFs]\label{def:flCwF}
For each level $i$ we define $\flCwF_i : \Set_{i+1}$ as an iterated $\Sigma$-type
with the following components:
\begin{enumerate}
  \item A CwF with underlying sets all in $\Set_i$. We reuse the component
    notations $\Con$, $\Sub$, $\Ty$, etc.\ from Definition \ref{def:tos}.
  \item $\Sigma$-types $\Sg : (A : \Ty\,\Gamma)\ra \Ty\,(\Gamma \ext A)
    \ra \Ty\,\Gamma$, with term formers $\proj1$, $\proj2$ and $\blank,\blank$.
  \item Identity type $\Id : (A : \Ty\,\Gamma)\ra \Tm\,\Gamma\,A\ra
    \Tm\,\Gamma\,A\ra \Ty\,\Gamma$, with $\refl$, equality reflection and UIP.
  \item Constant families. This includes a type former $\Kfam : \Con \ra
    \Ty\,\Gamma$, where $\Gamma$ is implicitly quantified, together with $\lamK
    : \Sub\,\Gamma\,\Delta \ra \Tm\,\Gamma\,(\Kfam\,\Delta)$ and its inverse
    $\appK$. The idea is that $\Kfam\,\Delta$ is a representation of $\Delta$ as
    a type in any context. Clairambault and Dybjer called constant families
    ``democracy'' in \cite{clairambault2014biequivalence}.
\end{enumerate}
We abbreviate the additional structure on CwFs consisting of $\Sigma$, $\Id$ and
$\Kfam$ as \emph{fl-structure}.
\end{definition}

\begin{definition}[Notion of induction in an flCwF]\label{def:induction}
Given $\bGamma : \flCwF_i$, we have the following predicate on contexts:
\begin{alignat*}{3}
  & \Inductive : \Con_{\bGamma} \ra \Set_i \\
  & \Inductive\,\Gamma := (A : \Ty_{\boldsymbol{\Gamma}}\,\Gamma)\ra \Tm_{\bGamma}\,\Gamma\,A
\end{alignat*}
\end{definition}

For an example, if we interpret $\mathsf{\NatSig}$ in the $\bM$ model, we get an
flCwF of natural number algebras, where $\Con$ is the type of algebras and
$\Sub\,\Gamma\,\Delta$ is the type of homomorphisms between $\Gamma$ and
$\Delta$ algebras. $\Ty$ is the type of displayed algebras, and $\Tm$ is the
type of their sections:
\begin{alignat*}{3}
  & \Ty\,(N,\,z,\,s) \equiv(N^D : N \to \Set)\\ & \hspace{2em}\times (N^D\,z)
  \times ((n : N)\to N^D\,n \to N^D(s\,n))\\ &
  \Tm\,(N,\,z,\,s)\,(N^D,\,z^D,\,s^D) \equiv (N^S : (n : N)\to
  N^D\,n)\\ & \hspace{2em}\times (N^S\,z = z^D)\times ((n : N) \to N^S\,(s\,n) =
  s^D\,n\,(N^S\,n))
\end{alignat*}
Thus, for natural number algebras, $\Inductive$ is exactly the predicate which
holds when an algebra supports induction.

%% $\Inductive$ can be viewed as a
%% dependently typed analogue of initiality: an algebra $\Gamma$ is inductive if
%% for every dependent algebra $A : \Ty\,\Gamma$ over it there is a dependent
%% morphism from $\Gamma$ to $A$, given as an element of $\Tm\,\Gamma\,A$.

\begin{theorem}[Equivalence of initiality and induction, c.f. \cite{kaposi2019constructing}]
\label{thm:initialind} An object $\Gamma : \Con_{\bGamma}$ supports induction if and only if it is
initial. Moreover, induction and initiality are both proof-irrelevant
predicates. \qed
\end{theorem}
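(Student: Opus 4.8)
The plan is to prove the biconditional in the two obvious directions and then verify proof-irrelevance of each predicate separately, throughout exploiting the fl-structure of $\bGamma$ (constant families and identity types) alongside the bare CwF operations. Neither half of the equivalence goes through with the plain CwF laws alone: the existence part of initiality will be powered by constant families, and the uniqueness part by identity types.

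First I would handle the forward direction. Suppose $\Gamma$ is initial, and fix an arbitrary $A : \Ty_{\bGamma}\,\Gamma$; I want to produce a section in $\Tm_{\bGamma}\,\Gamma\,A$. Forming the extended context $\Gamma \ext A$, initiality supplies a substitution $\sigma : \Sub_{\bGamma}\,\Gamma\,(\Gamma \ext A)$. Composing with the projection yields $\p \circ \sigma : \Sub_{\bGamma}\,\Gamma\,\Gamma$, which must equal $\id$ by uniqueness of maps out of an initial object. Decomposing $\sigma$ by the CwF $\eta$-law as $(\p \circ \sigma,\,\q[\sigma])$ and rewriting along $\p \circ \sigma = \id$, the second component $\q[\sigma]$ has type $\Tm_{\bGamma}\,\Gamma\,(A[\id]) = \Tm_{\bGamma}\,\Gamma\,A$, which is precisely the inhabitant demanded by $\Inductive\,\Gamma$.

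For the converse, suppose $\Inductive\,\Gamma$ and fix $\Delta : \Con_{\bGamma}$. For existence of a morphism $\Gamma \to \Delta$, I would form the constant family $\Kfam\,\Delta : \Ty_{\bGamma}\,\Gamma$, apply induction to obtain a term of $\Tm_{\bGamma}\,\Gamma\,(\Kfam\,\Delta)$, and convert it via $\appK$ into the desired substitution. For uniqueness, given $\sigma, \delta : \Sub_{\bGamma}\,\Gamma\,\Delta$, I would use democracy to produce $\lamK\,\sigma,\,\lamK\,\delta : \Tm_{\bGamma}\,\Gamma\,(\Kfam\,\Delta)$, form the identity type $\Id\,(\Kfam\,\Delta)\,(\lamK\,\sigma)\,(\lamK\,\delta) : \Ty_{\bGamma}\,\Gamma$, inhabit it by induction, and invoke equality reflection to get $\lamK\,\sigma = \lamK\,\delta$; applying $\appK$ and using that it inverts $\lamK$ yields $\sigma = \delta$.

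Finally, for proof-irrelevance: initiality is a $\Pi$ of contractibility statements, and contractibility of a type is a mere proposition under UIP, so initiality is proof-irrelevant by the usual closure of propositions under $\Pi$ and $\Sigma$. The more interesting case is $\Inductive$, which I would show is proof-irrelevant in a self-contained way: given $f, g : \Inductive\,\Gamma$, by function extensionality it suffices to prove $f\,A = g\,A$ for each $A : \Ty_{\bGamma}\,\Gamma$; applying $f$ to the type $\Id\,A\,(f\,A)\,(g\,A) : \Ty_{\bGamma}\,\Gamma$ produces an inhabitant of $\Tm_{\bGamma}\,\Gamma\,(\Id\,A\,(f\,A)\,(g\,A))$, and equality reflection delivers $f\,A = g\,A$ outright. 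I expect the main obstacle to be conceptual rather than computational: recognizing that the two halves of initiality require distinct pieces of the fl-structure (constant families for existence, identity types for uniqueness), so that the equivalence genuinely fails for bare CwFs; the remaining difficulty is the routine but delicate bookkeeping of substitution and transport equalities in the forward direction, ensuring $\q[\sigma]$ really has type $A$ after rewriting along $\p \circ \sigma = \id$.
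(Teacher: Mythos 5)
Your proof is correct and is essentially the intended argument: the paper defers the proof to \cite{kaposi2019constructing} (the theorem is stated with a \qed and no proof body), but the remarks immediately following it --- that products are $\Gamma \ext \Kfam\,\Delta$, equalizers are $\Gamma \ext \Id\,(\K\,\Delta)\,(\lamK\,\sigma)\,(\lamK\,\delta)$, and that $\Sigma$ is not needed for this equivalence --- confirm exactly your decomposition (context extension and $\q[\sigma]$ for initiality $\Rightarrow$ induction; $\Kfam$ for existence and $\Id$ with reflection for uniqueness in the converse; $\Id$ plus reflection and UIP for proof-irrelevance). No gaps.
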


The reason for the ``finite limit CwF'' naming is the following: Clairambault
and Dybjer showed that the 2-category of flCwFs is biequivalent to the
2-category of finitely complete categories
\cite{clairambault2014biequivalence}. In particular, in an flCwF the categorical
product of $\Gamma$ and $\Delta$ can be given as $\Gamma \ext \Kfam\, \Delta$,
and the equalizer of $\sigma$ and $\delta$ as $\Gamma\ext
\Id\,(\K\,\Delta)\,(\lamK\,\sigma)\,(\lamK\,\delta)$. While showing equivalence
of initiality and induction does not need all flCwF components (e.g.\ $\Sigma$
is not needed), we build the full flCwF semantics in order to connect to
Clairambault's and Dybjer's results.

In order to talk about weak structure-preservation in the interpretation of
substitutions, we need to specify isomorphisms for contexts and types.

\begin{definition}
A \emph{context isomorphism} is an invertible morphism $\sigma :
\Sub\,\Gamma\,\Delta$. We note the inverse as $\sigma^{-1}$. We also use the
notation $\sigma : \Gamma \simeq \Delta$.
\end{definition}

\begin{definition}[Type categories, c.f.\ \cite{clairambault2014biequivalence}]
\label{def:type_categories} For each $\Gamma : \Con$, there is a category
whose objects are types $A : \Ty\,\Gamma$, and morphisms from $A$ to $B$ are
terms $t : \Tm\,(\Gamma\,\ext\,A)\,(B[\p])$. Identity morphisms are given by $\q
: \Tm\,(\Gamma\,\ext\,A)\,(A[\p])$, and composition $t \circ u$ by $t[\p,
  u]$. The assignment of type categories to contexts extends to a split indexed
category. For each $\sigma : \Sub\,\Gamma\,\Delta$, there is a functor from
$\Ty\,\Delta$ to $\Ty\,\Gamma$, which sends $A$ to $A[\sigma]$ and $t :
\Tm\,(\Gamma\,\ext\,A)\,(B[\p])$ to $t[\sigma\circ \p, \q]$.
\end{definition}

\begin{definition} A \emph{type isomorphism}, notated $t : A \simeq B$ is an isomorphism in a type category. We note the inverse as $t^{-1}$.
\end{definition}

\subsection{Substitutions}
\label{sec:substitutions}

A \emph{weak flCwF morphism} $\boldsymbol{\sigma : \Sub\,\Gamma\,\Delta}$ is a
functor between underlying categories, which also maps types to types and terms
to terms, and satisfies the following mere properties:
  \begin{enumerate}
    \item $\bsigma\,(A[\sigma]) = (\bsigma\,A)\,[\bsigma\,\sigma]$
    \item $\bsigma\,(t[\sigma]) = (\bsigma\,t)\,[\bsigma\,\sigma]$
    \item The unique map $\epsilon : \Sub\,(\bsigma\,\emptycon)\,\emptycon$ has a retraction.
    \item Each $(\bsigma\,\p,\,\bsigma\,\q) : \Sub\,(\bsigma\,(\Gamma\,\ext\,A))\,(\bsigma\,\Gamma\,\ext\,\bsigma\,A)$ has an inverse.
  \end{enumerate}

In short, $\bsigma$ preserves substitution strictly and preserves empty context
and context extension up to isomorphism. We notate the evident isomorphisms as
$\bsigma_{\emptycon} : \bsigma\,\emptycon \simeq \emptycon$ and $\bsigma_{\ext}
: \bsigma\,(\Gamma\,\ext\,A)\,\simeq\,\bsigma\,\Gamma\,\ext\,\bsigma\,A$. Our
notion of weak morphism is the same as in \cite{birkedal2020modal}, when
restricted to CwFs.

Note that the definition we just gave lives in $\Set_{\max(i+1,j)}$, but by
cumulativity it is also in $\Set_{\max(i+1,j)+1}$, as required by our $\bM_{i,j}
: \ToS_{\max(i+1,j)+1, j}$ specification of the model being defined.

\begin{theorem}\label{thm:flpres}
Every $\boldsymbol{\sigma : \Sub\,\Gamma\,\Delta}$ preserves fl-structure up to
type isomorphism. That is, we have
\begin{alignat*}{3}
  & \bsigma_{\Sigma} : \bsigma\,(\Sigma\,A\,B) \simeq \Sigma\,(\bsigma\,A)\,((\bsigma\,B)[\bsigma_{\ext}^{-1}]) \\
  & \bsigma_{\K} : \bsigma\,(\K\,\Delta) \simeq \K\,(\bsigma\,\Delta) \\
  & \bsigma_{\Id} : \bsigma\,(\Id\,t\,u) \simeq \Id\,(\bsigma\,t)\,(\bsigma\,u)
\end{alignat*}
These are all natural in the following sense: for $\sigma :
\Sub_{\bGamma}\,\Gamma\,\Delta$, the functorial action of $\bsigma\,\sigma :
\Sub_{\bDelta}\,(\bsigma\,\Gamma)\,(\bsigma\,\Delta)$ on $\bsigma_{\Sigma}$ (in
the $\bsigma\,\Gamma$ context) is equal to $\bsigma_{\Sigma}$ (in
$\bsigma\,\Delta$), and similarly for $\bsigma_{\K}$ and $\bsigma_{\Id}$.

Moreover, $\bsigma$ preserves all term and substitution formers in the
fl-structure. For example, $\bsigma\,(\proj1\,t) = \proj1\,
(\bsigma_{\Sigma}[\id, \bsigma\,t])$.
\end{theorem}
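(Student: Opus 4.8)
The plan is to treat each fl-structure former as characterized by a \emph{universal property phrased purely in terms of context extension, substitution and projections}, all of which $\bsigma$ either preserves strictly ($\blank[\blank]$, $\p$, $\q$) or preserves up to the fixed coherent isomorphisms $\bsigma_{\ext}$ and $\bsigma_{\emptycon}$. Since a functor sends isomorphisms to isomorphisms, transporting each characterization through $\bsigma$ and reindexing along $\bsigma_{\ext}^{-1}$ will produce the three claimed isomorphisms. The key bookkeeping point is that every isomorphism in each chain is \emph{vertical} over $\bsigma\,\Gamma$ (commutes with the projection to $\bsigma\,\Gamma$), so a chain of context isomorphisms collapses to a single \emph{type} isomorphism in the type category of $\bsigma\,\Gamma$. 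For $\bsigma_{\Sigma}$ I would use the defining context isomorphism $\Gamma\ext\Sigma\,A\,B\simeq(\Gamma\ext A)\ext B$ over $\Gamma$. Applying $\bsigma$ and inserting $\bsigma_{\ext}$ at each extension gives the vertical chain $\bsigma\,\Gamma\ext\bsigma\,(\Sigma\,A\,B)\simeq\bsigma\,(\Gamma\ext\Sigma\,A\,B)\simeq\bsigma\,((\Gamma\ext A)\ext B)\simeq(\bsigma\,\Gamma\ext\bsigma\,A)\ext(\bsigma\,B)[\bsigma_{\ext}^{-1}]\simeq\bsigma\,\Gamma\ext\Sigma\,(\bsigma\,A)\,((\bsigma\,B)[\bsigma_{\ext}^{-1}])$, whose last step is the defining isomorphism of $\Sigma$ in the codomain; the reindexing by $\bsigma_{\ext}^{-1}$ is precisely what repairs the mismatch between $\bsigma\,(\Gamma\ext A)$ and $\bsigma\,\Gamma\ext\bsigma\,A$. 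Reading off the resulting type isomorphism gives $\bsigma_{\Sigma}$, and the term formers are preserved by unfolding this definition and invoking the $\beta$-laws in both flCwFs, yielding identities such as $\bsigma\,(\proj1\,t)=\proj1\,(\bsigma_{\Sigma}[\id,\bsigma\,t])$.

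For $\bsigma_{\K}$ I would first use stability of constant families under substitution to reduce to $\Gamma=\emptycon$: since $\K\,\Delta$ over $\Gamma$ is reindexed from $\K\,\Delta$ over $\emptycon$, its image is reindexed along $\bsigma\,\emptycon\simeq\emptycon$, so it suffices to produce the isomorphism over $\bsigma\,\emptycon$. There the democracy/terminal law gives a context isomorphism $\emptycon\ext\K\,\Delta\simeq\Delta$, with components $\appK\,\q$ and $(\epsilon,\lamK\,\id)$, in both flCwFs. Transporting the domain isomorphism through $\bsigma$, $\bsigma_{\ext}$ and $\bsigma_{\emptycon}$, and composing with the codomain version, exhibits both $\bsigma\,\emptycon\ext\bsigma\,(\K\,\Delta)$ and $\bsigma\,\emptycon\ext\K\,(\bsigma\,\Delta)$ as isomorphic to $\bsigma\,\Delta$ over the terminal object; verticality then yields $\bsigma_{\K}$, and preservation of $\lamK$ and $\appK$ follows from their $\beta\eta$-laws.

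The genuinely delicate case is $\bsigma_{\Id}$, which I expect to be the main obstacle. Using equality reflection and UIP, both $\bsigma\,(\Id\,t\,u)$ and $\Id\,(\bsigma\,t)\,(\bsigma\,u)$ are \emph{propositions}: the statement that $X$ is a proposition is equivalent to the diagonal $\Gamma\ext X\to(\Gamma\ext X)\ext X[\p]$ being a context isomorphism, and $\bsigma$ preserves this isomorphism. A forward map $\bsigma\,(\Id\,t\,u)\to\Id\,(\bsigma\,t)\,(\bsigma\,u)$ always exists: over $\Gamma\ext\Id\,t\,u$ reflection gives $t[\p]=u[\p]$; pushing this through $\bsigma$ and reindexing along $\bsigma_{\ext}^{-1}$ gives $(\bsigma\,t)[\p]=(\bsigma\,u)[\p]$ over $\bsigma\,\Gamma\ext\bsigma\,(\Id\,t\,u)$, and $\refl$ supplies the term. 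Equivalently, the codomain equalizer property produces a vertical comparison map $\bsigma\,(\Gamma\ext\Id\,t\,u)\to\bsigma\,\Gamma\ext\Id\,(\bsigma\,t)\,(\bsigma\,u)$. The hard part is inverting it: one must show that $\bsigma$ preserves the \emph{equalizer} defining $\Id$, i.e.\ produce a term of $\bsigma\,(\Id\,t\,u)$ from the codomain equation $\bsigma\,t=\bsigma\,u$ alone. A mere functor need not preserve equalizers, and $\bsigma$ need not be faithful, so this step cannot be purely formal; here I would lean on the full strength of weak-morphism preservation, namely strict substitution together with preservation of comprehension, to construct the equalizer lift — using reflection to make the lift definable, and the propositionality established above to force its uniqueness. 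Once the comparison map is shown invertible, $\bsigma_{\Id}$ follows and preservation of $\refl$ is automatic by UIP.

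Finally, the stated naturality of $\bsigma_{\Sigma}$, $\bsigma_{\K}$ and $\bsigma_{\Id}$ in $\sigma$ reduces to the naturality and coherence of $\bsigma_{\ext}$ under the functorial action of $\bsigma\,\sigma$, which is mechanical given the explicit descriptions above; likewise the remaining former-preservation equations are discharged by unfolding the definitions of the three isomorphisms and applying the relevant $\beta$-rules in the domain and codomain flCwFs.
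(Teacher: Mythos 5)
Your treatment of $\bsigma_{\Sigma}$ and $\bsigma_{\K}$ coincides with the paper's proof: both isomorphisms are extracted from chains of context isomorphisms built out of $\bsigma_{\ext}$, $\bsigma_{\emptycon}$, the defining isomorphisms of $\Sigma$ and $\K$, and (for $\K$) weakening of an isomorphism over the empty context along $\epsilon$. No issues there.

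The gap is in the $\Id$ case, precisely where you declare the main obstacle and then stop. You are right that a bare functor need not preserve equalizers, but your conclusion that the inverse map ``cannot be purely formal'' and must ``lean on the full strength of weak-morphism preservation'' is not a construction, and the worry about faithfulness is a red herring: the backward morphism follows formally from strict substitution preservation together with $\refl$ and reflection, which is exactly the paper's one-line argument. Concretely, write $J := \Id\,(A[\p])\,(t[\p])\,\q$ over $\Gamma\ext A$, so that $\Id\,A\,t\,u = J[(\id,u)]$ and hence $\bsigma\,(\Id\,A\,t\,u) = (\bsigma\,J)[\bsigma\,(\id,u)]$ by property (1) of weak morphisms. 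Over $\Delta_0 := \bsigma\,\Gamma\ext\Id\,(\bsigma\,t)\,(\bsigma\,u)$, reflection applied to $\q$ gives $(\bsigma\,t)[\p] = (\bsigma\,u)[\p]$, whence $\bsigma\,(\id,u)\circ\p = \bsigma_{\ext}^{-1}\circ(\p,\,(\bsigma\,u)[\p]) = \bsigma_{\ext}^{-1}\circ(\p,\,(\bsigma\,t)[\p]) = \bsigma\,(\id,t)\circ\p$. Therefore
\[
(\bsigma\,(\Id\,A\,t\,u))[\p] \;=\; (\bsigma\,J)[\bsigma\,(\id,u)\circ\p] \;=\; (\bsigma\,J)[\bsigma\,(\id,t)\circ\p] \;=\; (\bsigma\,(\Id\,A\,t\,t))[\p],
\]
and the last type is inhabited by $(\bsigma\,\refl_t)[\p]$, which is the required term of type $(\bsigma\,(\Id\,A\,t\,u))[\p]$ over $\Delta_0$. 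Your forward map is correct, and UIP on both sides makes the two maps mutually inverse, as in the paper. So the statement does hold by a formal argument; what your proposal is missing is this substitution-calculus step that lets the codomain equation $(\bsigma\,t)[\p]=(\bsigma\,u)[\p]$ act on the \emph{domain-image} type $\bsigma\,(\Id\,A\,t\,u)$ by rewriting the substitution it is pulled back along.
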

\begin{proof}
For $\bsigma_{\Sigma}$, we construct the following context isomorphism:
\begin{alignat*}{3}
& (\bsigma\,\Gamma\,\ext\,\bsigma\,(\Sigma\,A\,B)) \simeq
  (\bsigma\,\Gamma\,\ext\,\bsigma\,A\,\ext\,(\bsigma\,B)[\bsigma_{\ext}^{-1}]) \\
& \simeq (\bsigma\,\Gamma\,\ext\,\Sigma\,(\bsigma\,A)\,((\bsigma\,B)[\bsigma_{\ext}^{-1}]))
\end{alignat*}
This isomorphism is the identity on $\bsigma\,\Gamma$, hence we can extract the
desired $\bsigma_{\Sigma} : \bsigma\,(\Sigma\,A\,B) \simeq
\Sigma\,(\bsigma\,A)\,((\bsigma\,B)[\bsigma_{\ext}^{-1}])$ from it.

For $\bsigma_{\K}$, note the following:
\begin{alignat*}{3}
  & (\emptycon\,\ext\,\bsigma\,(\K\,\Delta)) \simeq
    (\bsigma\,\emptycon\,\ext\,\bsigma\,(\K\,\Delta)) \simeq
    \bsigma\,(\emptycon\,\ext\,\K\,\Delta)\\
  & \simeq \bsigma\,\Delta \simeq (\emptycon\,\ext\,K\,(\bsigma\,\Delta))
\end{alignat*}
This yields a type isomorphism $\bsigma\,(\K\,\Delta) \simeq
\K\,(\bsigma\,\Delta)$ in the empty context, and we use the functorial action of
$\epsilon : \Sub\,\Gamma\,\emptycon$ to weaken it to any $\Gamma$ context.

For $\bsigma_{\Id}$, both component morphisms can be constructed by $\refl$ and
equality reflection, and the morphisms are inverses by UIP. We omit here the
verification of naturality and that $\bsigma$ preserves term and substitution
formers in the fl-structure.
\end{proof}

\subsection{Identity and Composition}
\label{sec:idcomp}

$\bid : \bSub\,\bGamma\,\bGamma$ is defined in the obvious way, with identities for
underlying functions and for preservation morphisms.

For $\boldsymbol{\sigma \circ \delta}$, the underlying functions are given by
function composition, and the preservation morphisms are given as follows:
\begin{alignat*}{3}
  & (\boldsymbol{\sigma \circ \delta})_{\emptycon}^{-1} :=
    \bsigma\,\bdelta_{\emptycon}^{-1} \circ \bdelta_{\emptycon}^{-1} \\
  & (\boldsymbol{\sigma \circ \delta})_{\ext}^{-1} :=
    \bsigma\,\bdelta_{\ext}^{-1} \circ \bdelta_{\ext}^{-1}
\end{alignat*}

It is easy to verify the left and right identity laws and associativity for
$\boldsymbol{\blank\circ\blank}$.

\begin{lemma}\label{lem:idcomppres}
The derived preservation isomorphisms for the fl-structure can be decomposed
analogously; all derived isomorphisms in $\bid$ are identities, and we have
\begin{alignat*}{3}
  & (\boldsymbol{\sigma \circ \delta})_{\Sigma} =
  \bsigma\,\bdelta_{\Sigma} \circ \bdelta_{\Sigma}\\
  & (\boldsymbol{\sigma \circ \delta})_{\K} =
  \bsigma\,\bdelta_{\K} \circ \bdelta_{\K}\\
  & (\boldsymbol{\sigma \circ \delta})_{\Id} =
  \bsigma\,\bdelta_{\Id} \circ \bdelta_{\Id}
\end{alignat*}
On the right sides, $\blank\circ\blank$ refers to composition of type morphisms.
\end{lemma}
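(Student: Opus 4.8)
The plan is to use that each derived preservation isomorphism is, by its construction in the proof of Theorem~\ref{thm:flpres}, assembled entirely from the basic isomorphisms $\bsigma_{\emptycon}$ and $\bsigma_{\ext}$ together with strict substitution preservation and the native fl-structure of the codomain flCwF. To prove each equation I will substitute the composition formulas for $(\boldsymbol{\sigma\circ\delta})_{\emptycon}$ and $(\boldsymbol{\sigma\circ\delta})_{\ext}$ from Section~\ref{sec:idcomp} into those constructions and simplify, using that $\bsigma$ is a functor (so it preserves identities and composites of the type morphisms of Definition~\ref{def:type_categories}) and that it preserves substitution on the nose. The claim for $\bid$ then follows at once: since $\bid_{\emptycon}$ and $\bid_{\ext}$ are identities and substitution is preserved strictly, feeding identities through each construction produces identity type morphisms, so $\bid_{\Sigma}$, $\bid_{\K}$ and $\bid_{\Id}$ are all identities.

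For $\bsigma_{\Sigma}$, recall it is extracted from a context isomorphism that first applies $\bsigma_{\ext}$ to split $\bsigma$ of an extended context and then invokes the native $\Sigma$-structure of the codomain. I will write the analogous chain for $(\boldsymbol{\sigma\circ\delta})(\Gamma\ext\Sigma\,A\,B)$ and insert the composition formula for $(\boldsymbol{\sigma\circ\delta})_{\ext}$. Functoriality of $\bsigma$ lets me rewrite the portion coming from the inner morphism as $\bsigma$ applied to the very context isomorphism that defines $\bdelta_{\Sigma}$; the remaining portion is exactly the chain defining the preservation isomorphism of the outer morphism at the type $\Sigma\,(\bdelta\,A)\,((\bdelta\,B)[\bdelta_{\ext}^{-1}])$. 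Restricting the factored context isomorphism to the fibre over the base then exhibits $(\boldsymbol{\sigma\circ\delta})_{\Sigma}$ as the composite of type morphisms displayed in the statement.

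The case of $\bsigma_{\K}$ is analogous but driven by the $\emptycon$-formula, since $\bsigma_{\K}$ is built in the empty context out of $\bsigma_{\emptycon}$ and the native democracy isomorphisms; after inserting $(\boldsymbol{\sigma\circ\delta})_{\emptycon}$ and using functoriality, the composite democracy isomorphism factors into the two stages matching the right-hand side of the claimed equation. Here I additionally use naturality of the weakening map $\epsilon$, which in the proof of Theorem~\ref{thm:flpres} transports the empty-context isomorphism to an arbitrary $\Gamma$, to conclude that the factorization survives this weakening and hence holds in every context. The case of $\bsigma_{\Id}$ is the easiest of all: the forward components of both isomorphisms are parallel type morphisms into an $\Id$-type, hence equal by UIP and equality reflection, and since an isomorphism is determined by its forward component the two agree, exactly as in the original construction of $\bsigma_{\Id}$.

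The main obstacle is the substitution bookkeeping in the $\Sigma$ case. The codomain of $(\boldsymbol{\sigma\circ\delta})_{\Sigma}$ reindexes $B$ by $(\boldsymbol{\sigma\circ\delta})_{\ext}^{-1}$, whereas the two-stage composite reindexes first by $\bdelta_{\ext}^{-1}$ and, after applying $\bsigma$, by $\bsigma_{\ext}^{-1}$; matching these on the nose is precisely where the strict law $\bsigma\,(B[\sigma]) = (\bsigma\,B)[\bsigma\,\sigma]$ and the composition formula for $(\boldsymbol{\sigma\circ\delta})_{\ext}^{-1}$ are needed, and the naturality clause of Theorem~\ref{thm:flpres} is what guarantees that the $\bsigma$-image of $\bdelta_{\Sigma}$ targets the reindexed type rather than a merely isomorphic copy. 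To avoid grinding through these substitutions I expect it is cleanest to observe that $\bsigma_{\Sigma}$ is the \emph{unique} type isomorphism whose associated context isomorphism is the identity on the base and respects projections, and then verify that both sides of the equation satisfy this characterization, which is immediate from functoriality once the base components are seen to agree.
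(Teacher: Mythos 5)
Your proposal is correct and takes essentially the same route as the paper: the paper disposes of the $\Id$ case by UIP and of the $\Sigma$ and $\K$ cases by ``flCwF computation and straightforward unfolding of definitions,'' which is exactly the substitute-and-simplify argument you spell out (your version just makes the bookkeeping with $(\boldsymbol{\sigma\circ\delta})_{\ext}$, strict substitution preservation, and functoriality explicit). The concluding uniqueness-characterization shortcut is optional and not needed for the argument to go through.
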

\begin{proof}
In the case of $\Id$, the equations hold immediately by UIP. For $\Sigma$ and
$\K$, we prove by flCwF computation and straightforward unfolding of
definitions.
\end{proof}

\subsection{Empty Context}
The empty context $\boldsymbol{\emptycon : \Con}$ is the terminal flCwF, which
has all underlying sets defined as $\top$ (or constantly $\top$), with an
evident unique $\boldsymbol{\epsilon : \Sub\,\Gamma\,\emptycon}$. Since
$\bepsilon$ is a strict flCwF morphism, $\bepsilon_{\emptycon}^{-1}$ and
$\bepsilon_{\ext}^{-1}$ are both identity morphisms.

\subsection{Types}

We define $\boldsymbol{\Ty\,\Gamma} : \Set_{\max(i+1, j)+1}$ as the type of
split flCwF-isofibrations over $\bGamma$, at level $\max(i+1, j)$. We extend
Ahrens' and Lumsdaine's displayed categories and their definition of
isofibrations \cite{displayedcats}.  We first define displayed flCwFs, then
specify iso-cleaving as additional structure on top of that.

\begin{definition}[Displayed flCwF]
The type of displayed flCwFs at level $i$ is given as the logical predicate
interpretation (see e.g.\ \cite{bernardy12parametricity} or \cite{hiit}) of
$\flCwF_i$. For each flCwF component in $\bGamma$, there is a component in a
displayed flCwF which ``lies over'' it.

\emph{Notation.} In situations where we need to refer to both ``base'' and
displayed things, we give \ul{underlined} names to contexts, substitutions,
types and terms in a base flCwF. For example, we may have $\ulGamma :
\Con_{\bGamma}$ living in $\boldsymbol{\Gamma : \Con}$, and $\Gamma :
\Con_{\bA}\,\ulGamma$ living in a displayed flCwF over $\bGamma$. We only use
underlining on cETT variable names, and overload flCwF component names for
displayed counterparts. For example, a $\Con$ component is named the same in
a base flCwF and a displayed one.

Concretely, a displayed flCwF $\bA$ over $\bGamma$ has the following underlying
sets, which we call displayed contexts, substitutions, types and terms
respectively.
\begin{alignat*}{3}
  & \Con_{\bA} && : \Con_{\bGamma}\ra \Set_i\\
  & \Sub_{\bA} && : \Con_{\bA}\,\ulGamma \ra \Con_{\bA}\,\ulDelta \ra \Sub_{\bGamma}\,\ulGamma\,\ulDelta \ra \Set_i \\
  & \Ty_{\bA}  && : \Con_{\bA}\,\ulGamma \ra \Ty_{\bGamma}\,\ulGamma \ra \Set_i\\
  & \Tm_{\bA}  && : (\Gamma : \Con_{\bA}\,\ulGamma)\ra \Ty_{\bA}\,\Gamma\,\ulA \ra \Tm_{\bGamma}\,\ulGamma\,\ulA \ra \Set_i
\end{alignat*}

Above, we implicitly quantify over $\ulGamma$, $\ulDelta$ and $\ulA$ base
parameters. We also have the following components for empty context, context
extension and substitution. We omit listing other components here.
\begin{alignat*}{3}
  & \emptycon_{\bA} && : \Con_{\bA}\,\emptycon_{\bGamma}\\
  & \ext_{\bA}      && : (\Gamma : \Con_{\bA}\,\ulGamma)\ra \Ty_{\bA}\,\Gamma\,\ulA \ra
                     \Con_{\bA}\,\Gamma\,(\ulGamma \ext_{\bGamma} \ulA)\\
  & \blank[\blank]_{\bA} && : \Ty_{\bA}\,\Delta\,\ulA \ra \Sub_{\bA}\,\Gamma\,\Delta\,\ulsigma
                     \ra \Ty_{\bA}\,\Gamma\, (\ulA[\ulsigma]_{\bGamma})\\
  & \blank[\blank]_{\bA} && : \Tm_{\bA}\,\Delta\,A\,\ult \ra (\sigma : \Sub_{\bA}\,\Gamma\,\Delta\,\ulsigma)\\
  & && \hspace{0.5em}\ra \Tm_{\bA}\,\Gamma\, (A[\sigma]_{\bA})\,(\ult[\ulsigma]_{\bGamma})
\end{alignat*}
\end{definition}

In the following we will often omit $_{\bGamma}$ and $_{\bA}$ subscripts on
components; for example, in the type $\Con_{\bA}\,\emptycon$, the $\emptycon$ is
clearly a base component in $\bGamma$.

We also need displayed counterparts to the previously defined derived notions on
flCwFs; these are again given as logical predicate interpretations of the
non-displayed definitions.

\begin{definition}[Displayed type categories]
For each $\Gamma : \Con_{\bA}\,\ulGamma$, there is a displayed category over the
type category $\Ty_{\bGamma}\,\ulGamma$, whose objects over $\ulA :
\Ty_{\bGamma}\,\ulGamma$ are elements of $\Ty_{\bA}\,\Gamma\,\ulA$, and
displayed morphisms over $\ult : \Tm_{\bGamma}\,(\ulGamma \ext
\ulA)\,(\ulB[\p])$ are elements of $\Tm_{\bA}\,(\Gamma \ext
A)\,(B[\p])\,\ult$. The identity morphism is given by $\q_{\bA}$, and the
composition of $t$ and $u$ is $t[\p_{\bA},u]$. Analogously to Definition
\ref{def:type_categories}, this extends to a displayed split indexed category.
\end{definition}

\begin{definition}[Displayed isomorphisms]
A \emph{displayed context isomorphism} over $\ulsigma : \ulGamma \simeq
\ulDelta$, notated $\sigma : \Gamma \simeq_{\ulsigma} \Delta$, is an invertible
displayed morphism $\sigma : \Sub_{\bA}\,\Gamma\,\Delta\,\ulsigma$, with inverse
$\sigma^{-1} : \Sub_{\bA}\,\Delta\,\Gamma\,\ulsigma^{-1}$. A \emph{displayed
  type isomorphism} over $\ult : \ulA \simeq \ulB$, notated $t : A \simeq_{\ult}
B$, is an isomorphism in a displayed type category.
\end{definition}

\begin{definition}
A \emph{vertical morphism} lies over an identity morphism. We use this
definition for context morphisms (substitutions) and type morphisms as well.
\end{definition}

In contrast to \cite{kaposi2019constructing}, it is not sufficient to model
types as displayed flCwFs. In ibid.\ the universe $\U$ in ToS was empty, and all
substitutions were ``neutral'', i.e.\ semantic subsitutions were functors which
may permute, duplicate or forget components of algebras, or freely reinterpret
components, and it is easy to see that all such functors strictly preserve
limits. In contrast, the current $\U$ is not empty: it is closed under identity
and infinitary function types. Hence, substitutions and terms are not neutral
anymore, as they can contain canonical type codes in $\U$. Semantically, these
canonical type codes do not merely reshuffle structure, hence they preserve
limits only weakly. We will return to this in Section
\ref{sec:infinitaryfunction}. We are forced to use a weaker semantics where
fl-structure is not preserved strictly, and we also need to add additional
structure to displayed flCwFs which expresses preservation of base isomorphisms.

\begin{definition}[Context iso-cleaving] This lifts a base context isomorphism to a displayed one. It consists of
\begin{alignat*}{3}
  & \coe &&: \ulGamma \simeq \ulDelta \ra \Con_{\bA}\,\ulGamma \ra \Con_{\bA}\,\ulDelta\\
  & \coh &&: (\ulsigma : \ulGamma \simeq \ulDelta)(\Gamma : \Con_{\bA}\,\ulGamma)
           \ra \Gamma \simeq_{\ulsigma} \coe\,\ulsigma\,\Gamma\\
  & \coe^{\id} && : \coe\,\id\,\Gamma = \Gamma\\
  & \coe^{\circ} && : \coe\,(\ulsigma\circ\uldelta)\,\Gamma = \coe\,\ulsigma\,(\coe\,\uldelta\,\Gamma)\\
  & \coh^{\id} && : \coh\,\id\,\Gamma = \id\\
  & \coh^{\circ} && : \coh\,(\ulsigma\circ\uldelta)\,\Gamma = \coh\,\ulsigma\,(\coe\,\uldelta\,\Gamma)
          \circ \coh\,\uldelta\,\Gamma
\end{alignat*}
Here, $\coe$ and $\coh$ abbreviate ``coercion'' and ``coherence'' respectively.
\end{definition}

\begin{definition}[Type iso-cleaving] This consists of
\begin{alignat*}{3}
  & \coe &&: \ulA \simeq \ulB \ra \Ty_{\bA}\,\Gamma\,\ulA \ra \Ty_{\bA}\,\Gamma\,\ulB\\
  & \coh &&: (\ult : \ulA \simeq \ulB)(A : \Ty_{\bA}\,\Gamma\,\ulA)
           \ra A \simeq_{\ult} \coe\,\ult\,A\\
  & \coe^{\id} && : \coe\,\id\,A = A\\
  & \coe^{\circ} && : \coe\,\ult\,(\coe\,\uldelta\,A) = \coe\,(\ult\circ\uldelta)\,A\\
  & \coh^{\id} &&: \coh\,\id\,A = \id\\
  & \coh^{\circ} &&: \coh\,(\ult\circ\uldelta)\,A = \coh\,\ult\,(\coe\,\uldelta\,A)
          \circ \coh\,\uldelta\,A
\end{alignat*}
Additionally, for $\sigma : \Sub_{\bA}\,\Gamma\,\Delta\,\ulsigma$, we have
\begin{alignat*}{3}
  & \coe[] &&: \coe\,(\ult[\ulsigma])\,(A[\sigma]) = (\coe\,\ult\,A)[\sigma]\\
  & \coh[] &&: \coh\,(\ult[\ulsigma\circ \p,\q])\,(A[\sigma]) = (\coh\,\ult\,A)[\sigma]
\end{alignat*}

\end{definition}

\begin{definition} A \emph{split flCwF isofibration} is a displayed flCwF equipped with iso-cleaving for contexts and types.
\end{definition}

\emph{Remark.} It is not possible to model types as fibrations or opfibrations,
because we have no restriction on the variance of ToS types. For example, the
type which extends a pointed set to a natural number signature, is neither a
fibration nor an opfibration.

\subsection{Type Substitution}
We aim to define $\boldsymbol{\blank[\blank] : \Ty\,\Delta \ra
  \Sub\,\Gamma\,\Delta \ra \Ty\,\Gamma}$, such that $\boldsymbol{A[\id]} = \bA$
and $\boldsymbol{A[\sigma\circ\delta]} = \boldsymbol{A[\sigma][\delta]}$. The
underlying sets are given by simple composition:
\begin{alignat*}{3}
  & \Con_{\boldsymbol{A[\sigma]}}\,\ulGamma && := \Con_{\bA}\,(\bsigma\,\ulGamma)\\
  & \Sub_{\boldsymbol{A[\sigma]}}\,\Gamma\,\Delta\,\ulsigma && :=
    \Sub_{\bA}\,\Gamma\,\Delta\,(\bsigma\,\ulsigma)\\
  & \Ty_{\boldsymbol{A[\sigma]}}\,\Gamma\,\ulA && :=
      \Ty_{\bA}\,\Gamma\,(\bsigma\,\ulA)\\
  & \Tm_{\boldsymbol{A[\sigma]}}\,\Gamma\,A\,\ult && :=
      \Tm_{\bA}\,\Gamma\,A\,(\bsigma\,\ult)
\end{alignat*}

Moreover, $\id_{\boldsymbol{A[\sigma]}} := \id_{\bA}$,
$\sigma\circ_{\boldsymbol{A[\sigma]}}\delta := \sigma\circ_{\bA}\delta$, and
likewise components for substitution are given by corresponding components in
$\bA$. Context and type formers are given by coercing $\bA$
structures along $\bsigma$ preservation isomorphisms. For example:
\begin{alignat*}{3}
  &\emptycon_{\boldsymbol{A[\sigma]}} && :=
    \coe\,\bsigma_{\emptycon}^{-1}\,\emptycon_{\bA}\\
  &\Gamma\ext_{\boldsymbol{A[\sigma]}}A && :=
    \coe\,\bsigma_{\ext}^{-1}\,(\Gamma\ext_{\bA} A)\\
  &\Id_{\boldsymbol{A[\sigma]}}\,t\,u && :=
    \coe\,\bsigma_{\Id}^{-1}\,(\Id_{\bA}\,t\,u)
\end{alignat*}

Term and substitution formers are given by composing $\coh$-lifted
isomorphisms with term and substitution formers from $\bA$. For example:
\begin{alignat*}{3}
  & \epsilon_{\boldsymbol{A[\sigma]}} && :=
    \coh\,\bsigma_{\emptycon}^{-1}\,\emptycon_{\bA} \circ \epsilon_{\bA}\\
  & \p_{\boldsymbol{A[\sigma]}} && :=
    \p_{\bA} \circ (\coh\,\bsigma_{\ext}^{-1}\,(\Gamma\ext A))^{-1}\\
  & \appK_{\boldsymbol{A[\sigma]}}\,t && :=
    \appK_{\bA}\,((\coh\,\bsigma_{\K}\,(\K\,\Delta))^{-1}\circ t)
\end{alignat*}
Equations for term and type substitution follow from naturality of preservation
isomorphisms in $\bsigma$, $\coe[]$, $\coh[]$ and substitution equations in
$\bA$.

Iso-cleaving is given by iso-cleaving in $\bA$ and the action of $\bsigma$ on
isomorphisms, e.g.\ we have $\coe_{\boldsymbol{A[\sigma]}}\,\ulsigma\,\Gamma
:= \coe_{\bA}\,(\bsigma\,\ulsigma)\,\Gamma$.

Functoriality of type substitution, i.e.\ $\boldsymbol{A[\id]} = \bA$ and
$\boldsymbol{A[\sigma\circ\delta]} = \boldsymbol{A[\sigma][\delta]}$, follows
from Lemma \ref{lem:idcomppres} and split cleaving given by $\coe^{\id}$,
$\coe^{\circ}$, $\coh^{\id}$ and $\coh^{\circ}$ laws in $\bA$.

\subsection{Terms}

$\boldsymbol{\Tm\,\Gamma\,A} : \Set_{\max(i+1, j)+1}$ is defined as the type of
\emph{weak flCwF sections} of $\bA$. The underlying functions of $\bt :
\bTm\,\bGamma\,\bA$ are as follows:
\begin{alignat*}{3}
  & \bt : (\ulGamma : \Con_{\bGamma}) \ra \Con_{\bA}\,\ulGamma\\
  & \bt : (\ulsigma : \Sub_{\bGamma}\,\ulGamma\,\ulDelta)
         \ra \Sub_{\bA}\,(\bt\,\ulGamma)\,(\bt\,\ulDelta)\,\ulsigma\\
  & \bt : (\ulA : \Ty_{\bGamma}) \ra \Ty_{\bA}\,(\bt\,\ulGamma)\,\ulA\\
  & \bt : (\ult : \Tm_{\bGamma}\,\ulGamma\,\ulA) \ra
          \Tm_{\bA}\,(\bt\,\ulGamma)\,(\bt\,\ulA)\,\ult
\end{alignat*}
Such that
\begin{enumerate}
  \item $\bt\,(\ulA[\ulsigma]) = (\bt\,\ulA)\,[\bt\,\ulsigma]$
  \item $\bt\,(\ult[\ulsigma]) = (\bt\,\ult)\,[\bt\,\ulsigma]$
  \item The unique map $\epsilon_{\bA} : \Sub\,(\bt\,\emptycon)\,\emptycon\,\id$ has a vertical retraction.
  \item Each $(\bt\,\p,\,\bt\,\q) : \Sub\,(\bt\,(\ulGamma\,\ext\,\ulA))\,(\bt\,\ulGamma\,\ext\,\bt\,\ulA)\,\id$ has a vertical inverse.
\end{enumerate}

Similarly to Section \ref{sec:substitutions}, we denote the evident preservation
isomorphisms as $\bt_{\emptycon} : \bt\,\emptycon \simeq_{\id} \emptycon$ and
$\bt_{\ext} : \bt\,(\ulGamma\ext \ulA) \simeq_{\id} \bt\,\ulGamma \ext
\bt\,\ulA$. In short, weak section is a dependently typed analogue of weak
morphism, with dependent underlying functions and displayed preservation
isomorphisms. We also have the derived fl-preservation isomorphisms.

\begin{theorem} A weak section $\boldsymbol{t : \Tm\,\Gamma\,A}$ preserves fl-structure up to vertical type isomorphisms, that is, the following are derivable:
\begin{alignat*}{3}
  & \bt_{\Sigma} : \bt\,(\Sigma\,\ulA\,\ulB) \simeq_{\id} \Sigma\,(\bt\,\ulA)\,((\bt\,\ulB)[\bt_{\ext}^{-1}]) \\
  & \bt_{\K} : \bt\,(\K\,\ulDelta) \simeq_{\id} \K\,(\bt\,\ulDelta) \\
  & \bt_{\Id} : \bt\,(\Id\,\ult\,\ulu) \simeq_{\id} \Id\,(\bt\,\ult)\,(\bt\,\ulu)
\end{alignat*}
Also, the above isomorphisms are natural in the sense of Theorem
\ref{thm:flpres}, and $\bt$ preserves type and substitution formers in the
fl-structure.
\end{theorem}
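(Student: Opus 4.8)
The plan is to mirror the proof of Theorem \ref{thm:flpres} exactly, but in the displayed/vertical setting, exploiting the fact that the defining structure of a weak section $\bt$ is the dependently-typed analogue of a weak morphism $\bsigma$. The essential observation is that all of the work in Theorem \ref{thm:flpres} was carried out using only context isomorphisms built from $\bsigma_{\emptycon}$ and $\bsigma_{\ext}$ together with the fl-structure of the target flCwF. Here $\bt$ comes equipped with displayed counterparts $\bt_{\emptycon} : \bt\,\emptycon \simeq_{\id} \emptycon$ and $\bt_{\ext} : \bt\,(\ulGamma\ext\ulA) \simeq_{\id} \bt\,\ulGamma \ext \bt\,\ulA$, both of which are \emph{vertical} (they lie over identity). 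Since vertical displayed isomorphisms compose and invert just like ordinary ones, and the displayed type categories of Definition \ref{def:type_categories}'s displayed analogue support the same algebra of morphisms, every construction transports verbatim.

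Concretely, for $\bt_{\Sigma}$ I would build the displayed context isomorphism
\begin{alignat*}{3}
& (\bt\,\ulGamma \ext \bt\,(\Sigma\,\ulA\,\ulB)) \simeq_{\id}
  (\bt\,\ulGamma \ext \bt\,\ulA \ext (\bt\,\ulB)[\bt_{\ext}^{-1}]) \\
& \simeq_{\id} (\bt\,\ulGamma \ext \Sigma\,(\bt\,\ulA)\,((\bt\,\ulB)[\bt_{\ext}^{-1}]))
\end{alignat*}
using $\bt_{\ext}$ (applied twice, once at $\ulGamma\ext\ulA$ and once to expose $\ulB$) and the displayed $\Sigma$-structure of the isofibration $\bA$. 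As before, this isomorphism is vertical and is the identity on the $\bt\,\ulGamma$ prefix, so the desired vertical type isomorphism $\bt_{\Sigma}$ is extracted from it. For $\bt_{\K}$ I would replay the chain through $\bt_{\emptycon}$, the displayed democracy component $\bt\,(\emptycon\ext\K\,\ulDelta)\simeq_{\id}\bt\,\ulDelta$, and weaken along the action of $\epsilon$; for $\bt_{\Id}$ both directions are again produced by $\refl$ and equality reflection, with inversion immediate from UIP. Naturality and preservation of the term and substitution formers of the fl-structure follow by the same unfolding-of-definitions argument used in Theorem \ref{thm:flpres}, now indexed over base objects.

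The main point requiring care—and the one place the proof genuinely differs from the non-dependent case—is bookkeeping the base indices so that every displayed isomorphism actually lies over the correct base isomorphism. Because $\bt$ is a section rather than a morphism, each displayed piece must lie over the corresponding \emph{identity} base isomorphism (the stated isomorphisms are all $\simeq_{\id}$), which means I must check that the base components produced by the analogous steps of Theorem \ref{thm:flpres}, when specialized to a section, really reduce to identities rather than to nontrivial base isos. This is where the verticality hypotheses (items 3 and 4 in the definition of weak section, giving vertical retraction and vertical inverse) are exactly what is needed: they guarantee $\bt_{\emptycon}$ and $\bt_{\ext}$ are vertical, and hence every isomorphism assembled from them stays vertical. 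I expect this index-tracking to be the only real obstacle; once verticality is propagated, the remaining verifications are the direct displayed transcription of the earlier proof and can be omitted by the same appeal to flCwF computation.
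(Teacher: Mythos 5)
Your proposal matches the paper's proof, which simply observes that every construction in Theorem \ref{thm:flpres} has a displayed counterpart and replays it verbatim; your extra attention to verticality bookkeeping corresponds to the paper's follow-up remark that this lifting is an asymmetric (codomain-only) logical predicate translation rather than a fully mechanical one. The approach is essentially identical, just spelled out in more detail.
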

\begin{proof}
The construction of isomorphisms is the same as in Theorem
\ref{thm:flpres}. Indeed, every construction there has a displayed counterpart
which we can use here.
\end{proof}

We note though that the move from Theorem \ref{thm:flpres} to here is not simply a
logical predicate translation, because we are only lifting the codomain of a
weak morphism to a displayed version, and we leave the domain non-displayed. We
leave to future work the investigation of such asymmetrical (or
``modal'') logical predicate translations.

\subsection{Term Substitution}
$\boldsymbol{\blank[\blank] : \Tm\,\Delta\,A \ra (\sigma : \Sub\,\Gamma\,\Delta)
  \ra \Tm\,\Gamma\,(A[\sigma])}$ is given similarly to
$\boldsymbol{\blank\circ\blank}$ in Section \ref{sec:idcomp}. Underlying functions
are given by function composition, and preservation morphisms are also similar:
\begin{alignat*}{3}
  & (\boldsymbol{t[\sigma]})_{\emptycon}^{-1} :=
    \bt\,\bsigma_{\emptycon}^{-1} \circ \bt_{\emptycon}^{-1} \\
  & (\boldsymbol{t[\sigma]})_{\ext}^{-1} :=
    \bt\,\bsigma_{\ext}^{-1} \circ \bt_{\ext}^{-1}
\end{alignat*}
We also have the same decomposition of derived isomorphisms as in Lemma
\ref{lem:idcomppres}. We do not have to show functoriality of term substitution
here, since that is derivable in any CwF, see e.g. \cite{kaposi2019constructing}.

\subsection{Context Extension and Comprehension}

$\boldsymbol{\Gamma \ext A : \Con}$ is defined as the \emph{total flCwF} of
$\bA$. This is given by bundling together all displayed flCwF components in
$\bA$ with corresponding base components in $\bGamma$, using the metatheoretic
$\Sigma$-type. It is a straightforward extension of total categories in
\cite{displayedcats}.

$\boldsymbol{\p : \Sub\,(\Gamma\ext A)\,\Gamma}$ is a strict morphism given by
taking a first projection for each component. $\boldsymbol{\q : \Tm\,(\Gamma\ext
  A)\,(A[\p])}$ is likewise a strict flCwF section given by second
projections. Substitution extension $\boldsymbol{\sigma,\,t}$ is given by
pointwise combining $\bsigma$ and $\bt$ with metatheoretic $\Sigma$ pairing,
e.g.\ $\Con_{\boldsymbol{(\sigma,t)}}\,\ulGamma :=
(\bsigma\,\ulGamma,\,\bt\,\ulGamma)$.

\subsection{Universe}
\label{sec:universe}

\begin{definition}
For a level $i$, we write $\bSet_i$ for the flCwF of sets where $\Con_{\bSet_i}
:= \Set_i$ and $\Sub_{\bSet_i}\,\Gamma\,\Delta := \Gamma \ra \Delta$.
\end{definition}

We define $\boldsymbol{\U : \Ty\,\Gamma}$ as the isofibration which is
constantly $\bSet_i$. A constant isofibration does not actually depend on the
base flCwF, and has trivial iso-cleaving where $\coe$-s are identity
functions. Hence, we have $\Con_{\bU}\,\ulGamma := \Set_i$ and
$\Sub_{\bU}\,\Gamma\,\Delta\,\ulsigma := \Gamma \ra \Delta$.

\emph{Remark.} The type $\bTm\,\bGamma\,\bU$ is strictly equal to
$\bSub\,\bGamma\,\bSet_i$, so it is helpful to think about semantic elements of
the universe as weak morphisms from $\bGamma$ to $\bSet_i$.

\subsection{Elements of the Universe}

We define $\boldsymbol{\El : \Tm\,\Gamma\,\U \ra \Ty\,\Gamma}$ as discrete
isofibration formation. For $\boldsymbol{a : \Tm\,\Gamma\,\U}$, the underlying
sets of $\bEl\,\ba$ are the following:
\begin{alignat*}{3}
  & \Con_{\bEl\,\ba}\,\ulGamma && := \ba\,\ulGamma\\
  & \Sub_{\bEl\,\ba}\,\Gamma\,\Delta\,\ulsigma && := \ba\,\ulsigma\,\Gamma = \Delta\\
  & \Ty_{\bEl\,\ba}\,\Gamma\,\ulA && := \ba\,\ulA\,\Gamma\\
  & \Tm_{\bEl\,\ba}\,\Gamma\,A\,\ult && := \ba\,\ult\,\Gamma = A
\end{alignat*}
Hence, in $\bEl\,\ba$, $\Sub$ and $\Tm$ are propositional. We use the
isomorphisms $\ba_{\emptycon} : \ba\,\emptycon \simeq \top$ and $\ba_{\ext} :
\ba\,(\ulGamma\ext\ulA) \simeq (\Gamma :
\ba\,\ulGamma)\times(\ba\,\ulA\,\Gamma)$ to define empty context and context
extension:
\begin{alignat*}{3}
  & \emptycon_{\bEl\,\ba} && := \ba_{\emptycon}^{-1}\,\tt\\
  & (\Gamma\ext_{\bEl\,\ba} A) && := \ba_{\ext}^{-1}\,(\Gamma,\,A)
\end{alignat*}
We likewise use preservation isomorphisms to define $\K$, $\Id$ and $\Sigma$.
Context coercion is $\coe\,\ulsigma\,\Gamma := \ba\,\ulsigma\,\Gamma$. Type
coercion, for $A : \ba\,\ulA\,\Gamma$ is given as $\coe\,\ult\,A :=
\ba\,\ult\,(\ba_{\ext}^{-1}\,(\Gamma,\,A))$.

\subsection{Inductive Function Space}

For $\boldsymbol{a : \Tm\,\Gamma\,\U}$ and $\boldsymbol{B :
  \Ty\,(\Gamma\ext\El\,a)}$, we aim to define $\boldsymbol{\Pi\,a\,B}
\boldsymbol{:} \bTy\,\bGamma$. We define this as a dependent product of
isofibrations, indexed by a discrete domain. The discreteness is essential: with
a general $\bA \boldsymbol{:} \bTy\,\bGamma$ domain, $\bPii$ would not be
definable because of variance issues. Indeed, the category of categories is not
locally cartesian closed and does not support a general $\Pi$ type
\cite[Section A1.5]{johnstone2002sketches}.

Contexts are products of $\bB$-contexts, and types are products of $\bB$-types,
indexed respectively by contexts and types of $\bEl\,\ba$.
\begin{alignat*}{3}
  & \Con_{(\bPii\,\ba\,\bB)}\,\ulGamma &&:= (\gamma : \ba\,\ulGamma)\ra \Con_{\bB}\,(\ulGamma, \gamma)\\
  & \Ty_{(\bPii\,\ba\,\bB)}\,\Gamma\,\ulA &&:= (\gamma : \ba\,\ulGamma)(a : \ba\,\ulA\,\gamma)\ra \Ty_{\bB}\,(\Gamma\,\gamma)\,(\ulA, a)
\end{alignat*}
Note that since $\bB$ is over the total $\boldsymbol{(\Gamma\ext\El\,a)}$,
$\Con_{\bB}$ has a $\Sigma$-typed argument, and likewise the last argument
of every $\bB$ component.  We could define substitutions similarly, as
products of substitutions:
\begin{alignat*}{3}
  & \Sub_{(\bPii\,\ba\,\bB)}\,\Gamma\,\Delta\,\ulsigma :=
  (\gamma : \ba\,\ulGamma)(\delta : \ba\,\ulDelta)
  (\sigma : \Sub_{(\bEl\,\ba)}\,\gamma\,\delta\,\ulsigma)\\
  &\hspace{6.5em}\ra \Sub_{\bB}\,(\Gamma\,\gamma)\,(\Delta\,\delta)\,(\ulsigma, \sigma)
\end{alignat*}
This would work, but we know that $\Sub_{(\bEl\,\ba)}\,\gamma\,\delta\,\ulsigma$ is defined
as $\ba\,\ulsigma\,\gamma = \delta$, so we can eliminate $\sigma$ by singleton contraction,
and use the following equivalent definition:
\begin{alignat*}{3}
  & \Sub_{(\bPii\,\ba\,\bB)}\,\Gamma\,\Delta\,\ulsigma :=
  (\gamma : \ba\,\ulGamma)\ra\Sub_{\bB}\,(\Gamma\,\gamma)\,(\Delta\,(\ba\,\ulsigma\,\gamma)\,(\ulsigma, \refl)
\end{alignat*}
The benefit of the contracted definition is that it computes preservation laws
in algebra homomorphisms strictly as expected, while the non-contracted
definition computes homomorphisms as functional logical relations.

Terms are also given as a singleton-contracted version of products of terms. In
$\bPii\,\ba\,\bB$, all other structure is given pointwise by $\bB$-structure.

Iso-cleaving is given by transporting indices backwards in $\bEl\,\ba$ and outputs forwards
in $\bB$:
\begin{alignat*}{3}
  & \coe\,\ulsigma\,\Gamma &&:=
    \lambda\,\gamma.\,\coe_{\bB}\,(\ulsigma,\refl)\,(\Gamma\,(\ba\,(\ulsigma^{-1})\,\gamma))\\
  & \coe\,\ult\,A &&:=
    \lambda\,\gamma\,a.\,\coe_{\bB}\,(\ult,\refl)\,(A\,(\ba\,(\ult^{-1})\,(\ba_{\ext}^{-1}(\gamma,a))))
\end{alignat*}
Likewise, $\coh$-s are given by backwards-forwards $\coh$-s.

$\boldsymbol{\appi : \Tm\,\Gamma\,(\Pii\,a\,B)\ra
  \Tm\,(\Gamma\,\ext\,\El\,a)\,B}$ can be defined as currying of the underlying
functions, and $\boldsymbol{\lami}$ as uncurrying.

\subsection{External Function Space}

For $A : \Set_j$ and $\bB : A \ra \bTy\,\bGamma$, we define $\bPie\,A\,\bB
\boldsymbol{:} \bTy\,\bGamma$ as the $A$-indexed direct product of $\bB$. Since
the indexing is given by a metatheoretic function, every component is given in the
evident pointwise way.

\subsection{Infinitary Function Space}
\label{sec:infinitaryfunction}

For $A : \Set_j$ and $\bb : A \ra \bTm\,\bGamma\,\bU$, we aim to define
$\bPiinf\,A\,\bb \boldsymbol{:} \bTm\,\bGamma\,\bU$. The underlying functions
are:
\begin{alignat*}{3}
  & (\bPiinf\,A\,\bb)\,\ulGamma    &&:= (a : A)\ra \bb\,a\,\ulGamma\\
  & (\bPiinf\,A\,\bb)\,\ulsigma    &&:= \lambda\,a.\, \bb\,a\,\ulsigma\\
  & (\bPiinf\,A\,\bb)\,\ulA\       &&:= \lambda\,\Gamma.\,(a : A)\ra \bb\,a\,\ulA\,(\Gamma\, a)\\
  & (\bPiinf\,A\,\bb)\,\ult        &&:= \lambda\,a.\, \bb\,a\,\ult
\end{alignat*}
The preservation morphisms are as follows. Note that $\emptycon_{\bU} = \top$ and $\ext_{\bU}$ is
metatheoretic $\Sigma$.
\begin{alignat*}{3}
  &(\bPiinf\,A\,\bb)_{\emptycon}^{-1} && : \top\ra (\bPiinf\,A\,\bb)\,\emptycon\\
  %% &                               && : \top\ra (a : A) \ra \bb\,a\,\emptycon\\
  &(\bPiinf\,A\,\bb)_{\emptycon}^{-1} && := \lambda\,\_\,a.\,(\bb\,a)_{\emptycon}^{-1}\,\tt\\
  &(\bPiinf\,A\,\bb)_{\ext}^{-1} && : (\Gamma : (\bPiinf\,A\,\bb)\,\ulGamma)\times((\bPiinf\,A\,\bb)\,\ulA\,\Gamma)\\
  & && \hspace{0.5em}\ra (\bPiinf\,A\,\bb)\,(\ulGamma \ext \ulA)\\
  & (\bPiinf\,A\,\bb)_{\ext}^{-1} && := \lambda\,(\Gamma,A)\,a.\,(\bb\,a)_{\ext}^{-1}(\Gamma\,a,\,A\,a)
\end{alignat*}

The preservation of $\emptycon$ and $\blank\ext\blank$ here is in fact the main
point of divergence from \cite{kaposi2019constructing}. In ibid., substitutions
and terms are modeled as strict morphisms and types as displayed CwFs (with no
iso-cleaving). However, it is not the case that $(\bPiinf\,A\,\bb)\,\emptycon =
\top$, which is the statement of strict $\emptycon$-preservation. The left side
reduces to $(a : A)\ra \bb\,a\,\emptycon$, which is isomorphic to $\top$ but not
strictly equal to it. Likewise for $\ext$-preservation.

Hence, we are forced to
interpret terms as weak sections, which in turn forces us to interpret
types as isofibrations, since type substitution requires iso-cleaving.

\subsection{Identity}
For $\bt$ and $\bu$ in $\bTm\,\bGamma\,(\bEl\,\ba)$, we define $\bId\,\bt\,\bu
\boldsymbol{:} \bTm\,\bGamma\,\bU$ as expressing pointwise equality of weak
sections.
\begin{alignat*}{3}
& (\bId\,\bt\,\bu)\,\ulGamma &&:= (\bt\,\ulGamma = \bu\,\ulGamma)\\
& (\bId\,\bt\,\bu)\,\ulA     && := \lambda\,e.\, (\bt\,\ulA = \bu\,\ulA)
\end{alignat*}
Above, $\bt\,\ulA = \bu\,\ulA$ is well-typed because of $e :
\bt\,\ulGamma = \bu\,\ulGamma$. For substitutions, we have to complete a square
of equalities:
\begin{alignat*}{3}
  (\bId\,\bt\,\bu)\,(\ulsigma : \Sub\,\ulGamma\,\ulDelta) : (\bt\,\ulGamma = \bu\,\ulGamma) \ra
       (\bt\,\ulDelta = \bu\,\ulDelta)
\end{alignat*}
This can be given by $\bt\,\ulsigma : \ba\,\ulsigma\,(\bt\,\ulGamma) =
\bt\,\ulDelta$ and $\bu\,\ulsigma : \ba\,\ulsigma\,(\bu\,\ulGamma) =
\bu\,\ulDelta$. The action on terms is analogous. We omit preservation morphisms
here as they are straightforward. Like $\bPiinf$, $\bId$ also does not support
strict preservation of $\emptycon$ and $\ext$. Equality reflection and $\boldsymbol{\refl :}
\bId\,\bt\,\bt$ are also evident.

With this, we have defined the $\bM_{i,j} : \ToS_{\max(i+1,j)+1, j}$ model that
we set out to define in Section \ref{sec:overview}.

\section{Model Theory of the Theory of Signatures}
\label{sec:tossig}

At this point, we only have a notion of algebra for ToS, from Definition
\ref{def:tos}. In the following sections, we would also like to talk about
initial ToS-algebras and ToS-induction. We get these notions by giving a QIIT
signature for ToS, and interpreting it in the $\bM$ model from the previous
section.

\begin{definition}[Signature for ToS] For each level $j$,
we define $\ToSSig_j : \Sig_{j+i}$, as the signature for the theory of
signatures with external sets in $\Set_j$. This is a large and infinitary QIIT
signature, as we have $\Pie$ and $\Piinf$ abstracting over $A : \Set_j$ and
branching with $A \ra \Ty\,\Gamma$ and $A \ra \Tm\,\Gamma\,\U$ respectively. We
present an excerpt from $\ToSSig_j$ below.
\begin{alignat*}{3}
  & \emptycon && \ext (Con : \U)\\
  &           && \ext (Sub : Con\arri Con \arri \U)\\
  &           && \ext (Ty  : Con\arri\U)\\
  &           && \ext (Tm  : \Pii(\Gamma : Con)(\Ty\appitt\Gamma \arri \U))\\
  &           && ...\\
  &           && \ext (\Piinf : \Pii(\Gamma : Con)\\
  &           && \hspace{4.2em}(\Pie \Set_j (\lambda\,A.\,(A \arrinf \Ty\appitt\Gamma)\arri\El(\Ty\appitt\Gamma))))\\
  &           && ...
\end{alignat*}
\end{definition}

Now, for each $i$, the interpretation of $\ToSSig_j$ in $\bM_{i,j+1}$ yields an
flCwF $\bGamma$ such that $\Con_{\bGamma} = \ToS_{i,j}$. In short, we can
recover ToS algebras from the semantics of $\ToSSig$. This follows by
computation of the interpretation and the fact that $\ToSSig$ is precisely the
internal representation of $\ToS$. Hence, we have self-description modulo the
bumping of the $j$ level. Also, as we get an flCwF of $\ToS_{i,j}$-algebras, we
can use Definition \ref{def:induction} for the notion of $\ToS$-induction.

\emph{Remark.} By the definition of $\bemptycon$ and
$\boldsymbol{\blank\ext\blank}$, the types of algebras computed by $\bM$ are
always left-nested iterated $\Sigma$-types which start with $\top$. Hence, we
need to require that Definition \ref{def:tos} is similarly left-nested and
starts with $\top$, in order to make the match strict.

\section{Term Models of QIITs}
\label{sec:termmodels}

In this section we construct QIITs from initial ToS-algebras. For this, we need
to assume the existence of such algebras.

\subsection{Assuming Syntax for the Theory of Signatures}

\begin{lemma}[Cumulativity of $\ToS$]\label{lemma:cumulativity}
If $i \leq i'$, then $\ToS_{i,j} \leq \ToS_{i', j}$.  This follows from the
definition of ToS and the subtyping rules in Section
\ref{sec:cumulativity}. \qed
\end{lemma}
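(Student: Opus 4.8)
The plan is to proceed by structural induction on the iterated $\Sigma$-type that defines $\ToS_{i,j}$, applying the subtyping rules from Section \ref{sec:cumulativity} componentwise. The key observation is that the level $i$ enters the definition only through the four base sets — $\Con : \Set_i$, $\Sub : \Con \ra \Con \ra \Set_i$, $\Ty : \Con \ra \Set_i$, and $\Tm : (\Gamma : \Con)\ra \Ty\,\Gamma \ra \Set_i$ — each of which carries $\Set_i$ in \emph{result} (codomain) position, while the external level $j$ governing the $\Pie$ and $\Piinf$ domains is left untouched. Intuitively, then, $i$ occurs only covariantly, so cumulativity should lift through the whole record.

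Concretely, I would first establish the base comparisons. For $\Con$, the universe rule yields $\Set_i \leq \Set_{i'}$ directly from $i \leq i'$. For $\Sub$, $\Ty$, and $\Tm$, the level $i$ appears only as the final codomain of a chain of function types whose domains ($\Con$, $\Ty\,\Gamma$, and so on) are fixed; hence repeated application of the $\Pi$-codomain rule reduces each comparison to $\Set_i \leq \Set_{i'}$ at the leaf. I would then assemble these using the $\Sigma$-subtyping rule. At each application the earlier components are introduced as variables of their level-$i$ types, and the crucial point is this: when a later component's type mentions a base sort in \emph{domain} position (e.g.\ the $\Con \ra \Con \ra \Set_i$ inside $\Sub$), that sort is now a bound variable whose type is literally identical on both sides, so the two function domains coincide and only the codomain varies. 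That is exactly what the available $\Pi$-rule permits, which is why no contravariance obstacle arises.

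The remaining components — the type and term formers $\U$, $\El$, $\Pii$, $\Pie$, $\Piinf$, $\Id$, $\Refl$, and the projections — are all built from the base sorts and from the fixed external $\Set_j$ by $\Pi$ and $\Sigma$ with $\Set_i$ (or $\Set_j$) in result position, so each is handled by the same combination of $\Pi$-codomain and $\Sigma$ rules; the $\Set_j$-indexed domains of $\Pie$ and $\Piinf$ are syntactically identical on both sides and contribute by reflexivity. The equation components ($\beta\eta$ and the substitution laws), which are omitted from the displayed listing but belong to $\ToS_{i,j}$, are propositional: since both of their sides are preserved by subsumption and we have UIP together with equality reflection, the corresponding equality types match and present no difficulty.

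The main obstacle I expect is the variance bookkeeping: one must verify, component by component, that $i$ never lands in a function-domain or otherwise contravariant position, since the subtyping calculus offers no rule for weakening a $\Pi$-domain. The discussion above shows this holds because every genuine occurrence of $\Set_i$ is a result universe, while every occurrence of a base sort inside a domain is as an already-bound variable of a fixed type. Once this discipline is checked, the induction closes mechanically, giving $\ToS_{i,j} \leq \ToS_{i',j}$; by the introduction rule for $\Subtype$ one then also obtains the corresponding internal witness $\subtype$.
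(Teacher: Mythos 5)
Your proof is correct and takes the same route the paper intends: the paper's own ``proof'' is just the assertion that the lemma follows from the definition of $\ToS$ and the subtyping rules. Your componentwise elaboration --- observing that $\Set_i$ occurs only as a result universe in $\Con$, $\Sub$, $\Ty$, $\Tm$, while every domain occurrence of a sort is a bound variable whose type is identical on both sides of the comparison, so only the $\Pi$-codomain and $\Sigma$ rules are ever needed --- is exactly the argument that assertion presupposes.
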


\textbf{Assumption.} For each level $j$ and $k$ such that $j+1 \leq k$, we
assume the existence of $\syn_j : \ToS_{j+1,j}$, and we assume that $\syn_j$,
considered as an element of $\ToS_{k,j}$ by Lemma \ref{lemma:cumulativity}, is
inductive in the sense of Definition \ref{def:induction}.

We explain this assumption. The syntax for the theory of signatures is
postulated at the lowest possible level $\ToS_{j+1,j}$. This is the lowest
because signatures may contain $A : \Set_j$ types, and since we want to view the
syntax as freely generated, its inductive sorts must be large enough to contain
the $A$ types. Otherwise we would run into Russell's paradox. Then, the
induction assumption says that we have induction at all levels larger than
$j+1$.

\begin{example}
We have $\syn_0 : \ToS_{1, 0}$, which is the syntax of closed QIIT
signatures. We want to define a function $\mathsf{length} : \Con_{\syn_0} \ra
\mathbb{N}$ by induction, which returns the length of a syntactic context as a
metatheoretic natural number. To this end, we define a displayed ToS over
$\syn_0$, where $\Con$ is defined as constantly $\mathbb{N}$, every other sort
is defined as constantly $\top$, $\emptycon$ is defined as $0$ and $\Gamma\ext
A$ is defined as $\Gamma + 1$. By the induction assumption, we get a ToS-section
from $\syn_0$ to the displayed model, whose action on contexts is exactly the
$\mathsf{length}$ function. Note that the induction assumption requires that the
displayed model is at least at level 1, but this is not problematic because by
cumulativity $\mathbb{N} : \Set_1$.
\end{example}

For every $M : \ToS_{j+1,j}$, there is a unique strict $\ToS$-morphism from
$\syn_j$ to $M$. This follows from the induction assumption on $\syn_j$ and
Theorem \ref{thm:initialind}. We denote this morphism as
$\llbracket\blank\rrbracket_M$.  For example, given $\Gamma : \Con_\syn$, we
have $\llbracket\Gamma\rrbracket_M : \Con_M$.  Also, for every displayed
$\ToS$-model $M$ over $\syn_j$, there is a strict $\ToS$-section of $M$. We also
denote this as $\llbracket\blank\rrbracket_M$, so e.g.\ for $\ulGamma :
\Con_\syn$ we have $\llbracket\ulGamma\rrbracket_M : \Con_M\,\ulGamma$.

With $\syn$ at hand, we can use an alternative, more conventional representation of
signatures.

\begin{definition}
We define $\SynSig_j : \Set_{j+1}$, the type of \emph{syntactic signatures} at
$j$, as $\Con_{\syn_j}$.
\end{definition}

We can convert a signature to a syntactic one by interpreting it in $\syn_j$,
and we can convert in the other direction by using ToS-induction to interpret a
$\Gamma : \Con_{\syn_j}$ in an arbitrary ToS model. This is merely a logical
equivalence, external to cETT (because of universe polymorphism), and not an
isomorphism.

\subsection{Useful Model Fragments of $\bM$}

In the following, we will need three model fragments of $\bM$, which can be used
to compute notions of algebras, displayed algebras and sections respectively for
each syntactic signature. This is a rephrasing of the $\blank^{\A}$,
$\blank^{\D}$ and $\blank^{\S}$ interpretations in
\cite{kaposi2019constructing}, where they are discussed at more length.

\begin{definition}[The $\Set$ model of $\ToS$] For each $i$ and $j$, we have $\bA : \ToS_{\max(i+1, j)+1, j}$, which can be given by restricting the $\bM_{i,j}$ model
of Section \ref{sec:semantics} so that we only have the first $\Con$ components
in the interpretations for contexts, substitutions, types, terms, and we only
have actions on contexts in the interpretations of term and substitution
formers. Hence, we have:
\begin{alignat*}{3}
  & \Con_{\bA} && = \Set_{\max(i+1, j)} \\
  & \Sub_{\bA}\,\Gamma\,\Delta && = \Gamma \ra \Delta\\
  & \Ty_{\bA}\,\Gamma && = \Gamma \ra \Set_{\max(i+1, j)}\\
  & \Tm_{\bA}\,\Gamma\,A && = (\gamma : \Gamma) \ra A\,\gamma
\end{alignat*}
Now, for some $\Gamma : \Con_{\syn_j}$, the type of $\Gamma$-algebras at level
$i$ is given by $\llb\Gamma\rrb_{\bA}$, where we implicitly lift $\syn_j :
\ToS_{j+1, j}$ to $\ToS_{\max(i+1,j)+1, j}$. E.g.\, $\llb\NatSig\rrb_{\bA}$
yields a left-nested $\Sigma$-type of pointed sets with an endofunction. Also,
$\llb\Gamma\rrb_{\bM}$ extends $\llb\Gamma\rrb_{\bA}$ to an flCwF of
$\Gamma$-algebras, and $\llb\Gamma\rrb_{\bA} = \Con_{\llb\Gamma\rrb_{\bM}}$.
\end{definition}

\begin{definition}[Logical predicate model of $\ToS$ over the $\Set$ model]
For each $i$ and $j$ level we have $\bD$, which is a displayed $\ToS$ model over
$\bA$. This model, analogously to $\bA$, is given by restricting $\bM_{i,j}$ to
the $\Ty$ components everywhere, corresponding to types or actions on
types. Hence, we have:
\begin{alignat*}{3}
  & \Con_{\bD}\,\ulGamma && = \ulGamma \ra \Set_{\max(i+1, j)} \\
  & \Sub_{\bD}\,\Gamma\,\Delta\,\ulsigma && = (\ulgamma : \ulGamma) \ra \Gamma\,\ulgamma \ra \Delta\,(\ulsigma\,\ulgamma)\\
  & \Ty_{\bD}\,\Gamma\,\ulA && = (\ulgamma : \ulGamma) \ra \Gamma\,\ulgamma \ra \ulA\,\ulgamma \ra \Set_{\max(i+1, j)}\\
  & \Tm_{\bD}\,\Gamma\,A\,\ult && = (\ulgamma : \ulGamma)(\gamma : \Gamma\,\ulgamma)
    \ra A\,\ulgamma\,\gamma\,(\ult\,\ulgamma)
\end{alignat*}
\end{definition}

For $\Gamma : \SynSig_j$, the type of displayed $\Gamma$-algebras at level $i$
over some $\ulgamma : \llb\Gamma\rrb_{\bA}$ is given by
$\llb\Gamma\rrb_{\bD}\,\ulgamma$. Here, we also implicitly lift $\Gamma$ to live
in the appropriately sized $\syn$. In other words, $\llb\Gamma\rrb_{\bD}$ yields
the notion of types in the flCwF of $\Gamma$-algebras given by
$\llb\Gamma\rrb_{\bM}$, so we have $\llb\Gamma\rrb_{\bD} = \Ty_{\llb\Gamma\rrb_{\bM}}$.

\begin{definition}[Displayed algebra section model of $\ToS$] Analogously to
$\bA$ and $\bD$, for each $i$ and $j$ levels we define $\bS$ as a displayed
$\ToS$ model over the total model of $\bD$, which is given by restricting
$\bM_{i,j}$ to the $\Tm$ components, corresponding to interpretations of terms and
actions on terms.

For $\ulgamma : \llb\Gamma\rrb_{\bA}$ and $\gamma :
\llb\Gamma\rrb_{\bD}\,\ulgamma$, the type of $\Gamma$-sections at level $i$ is
computed as $\llb\Gamma\rrb_{\bS}\,\ulgamma\,\gamma$, and we have
$\llb\Gamma\rrb_{\bS} = \Tm_{\llb\Gamma\rrb_{\bM}}$.
\end{definition}

\subsection{Term Algebras}

The basic idea is that initial algebras can be built from the terms of $\syn_j$.
For example, consider the syntactic signature for natural numbers:
\[
\NatSig := \emptycon\ext(N : \U)\ext(\mi{zero} : \El\,N)\ext (\mi{suc} : N
\arri \El\,N)
\]
The type $\Tm_{\syn}\,\NatSig\,(\El_{\syn}\,N)$ is isomorphic to the usual type
of natural numbers, since, intuitively, such terms can only be built from
iterated usage of $\mi{zero}$ and $\mi{suc}$. We build a term algebra for each
signature in this manner.

\begin{definition}[Term algebra construction]
For each syntactic signature $\ulOmega : \SynSig_j$, we define a displayed $\ToS$
model over $\syn_j$, named $\bT_{\ulOmega}$. The underlying sets are as follows:
\begin{alignat*}{3}
  & \Con_{\bT_{\ulOmega}}\,\ulGamma &&:= \Sub\,\ulOmega\,\ulGamma \ra \llb\ulGamma\rrb_{\bA}\\
  & \Sub_{\bT_{\ulOmega}}\,\Gamma\,\Delta\,\ulsigma &&:= (\ulnu : \Sub\,\ulOmega\,\ulGamma)
  \ra \Delta\,(\ulsigma\circ\ulnu) \simeq \llb\ulsigma\rrb_{\bA}\,(\Gamma\,\ulnu)\\
  & \Ty_{\bT_{\ulOmega}}\,\Gamma\,\ulA &&:= (\ulnu : \Sub\,\ulOmega\,\ulGamma) \ra
    \Tm\,\ulOmega\,(\ulA[\ulnu]) \ra \llb\ulA\rrb_{\bA}\,(\Gamma\,\ulnu)\\
  & \Tm_{\bT_{\ulOmega}}\,\Gamma\,A\,\ult &&:= (\ulnu : \Sub\,\ulOmega\,\ulGamma)\ra
    \llb\ulA\rrb_{\bA}\,\ulnu\,(\ult[\ulnu]) \simeq_{\id} \llb\ult\rrb_{\bA} (\Gamma\,\ulnu)
\end{alignat*}

Above, the $\simeq$ in the definition of $\Sub_{\bT_{\ulOmega}}$ is a context isomorphism
in $\llb\ulDelta\rrb_{\bM}$, which is the flCwF of $\ulDelta$-algebras. The
$\simeq_{\id}$ in $\Ty_{\bT_{\ulOmega}}$ is a vertical context isomorphism in the displayed
flCwF given by $\llb\ulA\rrb_{\bM}$.

So far, the underlying sets in $\bT_{\ulOmega}$ are similar to what was given in
\cite{kaposi2019constructing} in the construction of term algebras, but there is
an important difference: in ibid.\ strict equalities are used instead of
isomorphisms. In our case, isomorphisms are necessary once again because of
infinitary functions types and our identity type; we shall see this shortly. The
universe is interpreted as follows:
\begin{alignat*}{3}
  & \U_{\bT_{\ulOmega}} : (\ulnu : \Sub\,\ulOmega\,\ulGamma)(\ult : \Tm\,\ulOmega\,\U)\ra
              \Set_{j+1}\\
  & \U_{\bT_{\ulOmega}}\,\ulnu\,\ult := \Tm\,\ulOmega\,(\El\,\ult)\\
  & \El_{\bT_{\ulOmega}}\,a : (\ulnu : \Sub\,\ulOmega\,\ulGamma)
             (\ult : \Tm\,\ulOmega\,(\El\,(\ula[\ulnu])))\ra
              \llb\ula\rrb_{\bA}\,(\Gamma\,\ulnu)\\
  & \El_{\bT_{\ulOmega}}\,a\,\ulnu\,\ult := (a\,\ulnu)\,\ult
\end{alignat*}
Hence, a syntactic $\ult : \Tm\,\ulOmega\,\U$ is interpreted as a set of terms
with type $\El\,\ult$. In the interpretation of $\El$, note that
\[
  a\,\ulnu : \U_{\bT_{\ulOmega}}\,\ulnu\,(\ula[\ulnu]) \simeq_{\id}
                 \llb\ula\rrb_{\bA}\,(\Gamma\,\ulnu)
\]
hence
\[
a\,\ulnu : \Tm\,\ulOmega\,(\El\,(\ula[\ulnu])) \simeq_{\id}
           \llb\ula\rrb_{\bA}\,(\Gamma\,\ulnu)
\]
The $\simeq_{\id}$ above is just an isomorphism of sets, since it lives in
$\llb\U\rrb_{\bM}$ which was given as the flCwF of sets in Section
\ref{sec:universe}. This above isomorphism is a good summary of the
construction: the interpretation of a $\ula : \Tm\,\ulOmega\,\U$ in the term
algebra is isomorphic to a set of terms.

Inductive functions are interpreted by transport along such isomorphism:
\[
  \Pii_{\bT_{\ulOmega}}\,a\,B\,\ulnu\,\ult := \lambda\,\alpha.\,
         B\,(\ulnu,\,(a\,\ulnu)^{-1}\,\alpha)\,(\ult\,\appitt\,((a\,\ulnu)^{-1}\,\alpha))
\]

For the infinitary function space, we need the following, where $\simeq_{\id}$ is
again set isomorphism.
\begin{alignat*}{3}
  & \Piinf_{\bT_{\ulOmega}}\,A\,b\,\ulnu : \Tm\,\ulOmega\,(\El\,(\Piinf\,A\,(\lambda\,\alpha.\,(b\,\alpha)[\ulnu])))\\
  & \hspace{4.2em}\simeq_{\id} ((\alpha : A) \ra \llb\alpha\rrb_{\bA}\,(\Gamma\,\ulnu))
\end{alignat*}
This can be given using the natural isomorphism consisting of $\appinf$ and
$\laminf$. However, the sides are not strictly equal. For the identity type, we
build the following isomorphism using equality reflection.
\begin{alignat*}{3}
  & \Id_{\bT_{\ulOmega}}\,a\,t\,u\,\ulnu : \Tm\,\ulOmega\,(\El\,(\Id\,(\ula[\ulnu])\,(\ult[\ulnu])\,(\ulu[\ulnu]))\\
  & \hspace{4.2em}\simeq_{\id} (\llb\ult\rrb_{\bA}\,(\Gamma\,\ulnu) =   \llb\ulu\rrb_{\bA}\,(\Gamma\,\ulnu))
\end{alignat*}

We omit the rest of the definition of $\bT_{\ulOmega}$. The interpretations of
equations in the CwF and the type formers are fairly technical, and we also
need to utilize iso-cleaving to interpret type substitution and substitution
laws. However, the basic shape of the model remains similar to
\cite{kaposi2019constructing}.
\end{definition}

Now, we can build the term algebra for $\ulOmega$ by taking
$\llb\ulOmega\rrb_{\bT_{\ulOmega}}\,\id$, which has type $\llb\ulOmega\rrb_{\bA}$.

\emph{Remark.} If we start with a syntactic signature at level $j$, then the
underlying sets in the term algebra are all in $\Set_{j+1}$. Hence, the term
algebra for $\NatSig : \SynSig_0$ has an underlying set in $\Set_1$. This is
a bit inconvenient, since normally we would have natural numbers in $\Set_0$. Our
current term model construction cannot avoid this level bump, since $\syn_j$ is
necessarily large, and we do not have a way to construct a small set from
a large set of terms. Perhaps this would be possible with a \emph{resizing rule}
\cite{voevodsky2011resizing}. Also, if we only consider closed finitary QIITs,
with no possibility of referring to external types in signatures, then we can
modify the current term model construction so that we always build sets in
$\Set_0$. This would cover natural numbers and most dependent type theories.

\subsection{Cumulativity of Algebras}

We would like to show that term algebras are initial, but we want to do this
\emph{on all universe levels}, i.e.\ that term algebras are initial when lifted
to any higher level. This requires showing that QII algebras are
cumulative. We do this by induction on syntactic signatures.

\begin{definition}[Cumulativity model]
We assume $j$, $k$ and $l$ levels such that $j+1 \leq k$, $j+1 \leq l$ and $k
\leq l$. We define a displayed model over $\syn_j : \ToS_{j+1, j}$ lifted to
$\ToS_{l, j}$. In the following, we notate the level of algebras computed by
$\llb\blank\rrb_{\bA}$ with an extra index, as in
$\llb\ulGamma\rrb_{\bA_k}$. The underlying sets of the model are as follows.
\begin{alignat*}{3}
  & \Con\,\ulGamma && := \Subtype\,\llb\ulGamma\rrb_{\bA_k}\,\llb\ulGamma\rrb_{\bA_l}\\
  & \Sub\,\Gamma\,\Delta\,\ulsigma && := (\gamma : \llb\ulGamma\rrb_{\bA_k})\ra
  \llb\ulsigma\rrb_{\bA_k}\,\gamma = \llb\ulsigma\rrb_{\bA_l}\,\gamma\\
  & \Ty\,\Gamma\,\ulA &&:= (\gamma : \llb\ulGamma\rrb_{\bA_k})\ra
      \Subtype\,(\llb\ulA\rrb_{\bA_k}\,\gamma)\,(\llb\ulA\rrb_{\bA_l}\,\gamma)\\
  & \Tm\,\Gamma\,A\,\ult &&:= (\gamma : \llb\ulGamma\rrb_{\bA_k})\ra
     \llb\ult\rrb_{\bA_k}\,\gamma = \llb\ult\rrb_{\bA_l}\,\gamma
\end{alignat*}
The rest of the model is straightforward to define. Now, it follows from the
induction assumption for $\syn$ and the reflection rule for $\Subtype$ in
Section \ref{sec:cumulativity}, that $\llb\ulGamma\rrb_{\bA_k} \leq
\llb\ulGamma\rrb_{\bA_l}$.
\end{definition}

\subsection{Term Algebras Support Induction}

\begin{definition}
We assume $j$ and $k$ such that $j + 1 \leq k$, and we also assume $\ulOmega :
\SynSig_j$ and $\gamma :
\llb\ulOmega\rrb_{\bD_k}\,(\llb\ulOmega\rrb_{\bT_{\ulOmega}}\,\id)$.  Hence,
$\gamma$ is a displayed $\ulOmega$-algebra over the term algebra, at level
$k$. We are using the cumulativity of $\ulOmega$ here to lift the term algebra
appropriately.  We aim to show that $\gamma$ has a section. We define a
displayed model over $\syn_j$ lifted to $\ToS_{k, j}$, which we name
$\bI_{\ulOmega}$. The underlying sets are:
\begin{alignat*}{3}
  & \Con_{\bI_{\ulOmega}}\,\ulGamma &&:= (\ulnu : \Sub\,\ulOmega\,\ulGamma)\ra
  \llb\ulGamma\rrb_{\bS}\,(\llb\ulnu\rrb_{\bA}\,(\llb\ulOmega\rrb_{\bT_{\ulOmega}}\,\id))\,\gamma\\
  & \Sub_{\bI_{\ulOmega}}\,\Gamma\,\Delta\,\ulsigma && := (\ulnu : \Sub\,\ulOmega\,\ulGamma)\ra
    \Delta\,(\ulsigma\circ\ulnu) = \llb\ulsigma\rrb_{\bS}\,(\Gamma\,\ulnu)\\
  & \Ty_{\bI_{\ulOmega}}\,\Gamma\,\ulA &&:=
  (\ulnu : \Sub\,\ulOmega\,\ulGamma)(t : \Tm\,\ulOmega\,(\ulA[\ulnu]))\\
  & && \hspace{1.2em}\ra
  \llb\ulA\rrb_{\bS}\,(\llb\ult\rrb_{\bA}\,(\llb\ulOmega\rrb_{\bT_{\ulOmega}}\,\id))\,
  (\llb\ult\rrb_{\bD}\,\gamma)\,(\Gamma\,\ulnu)\\
  & \Tm_{\bI_{\ulOmega}}\,\Gamma\,A\,\ult &&:= A\,\ulnu\,(\ult[\ulnu]) = \llb\ult\rrb_{\bS}\,(\Gamma\,\ulnu)
\end{alignat*}

Here, there is no essential change compared to \cite{kaposi2019constructing},
and we follow ibid.\ in the definition of $\bI_{\ulOmega}$. The reason is that
although we have weakened strict algebra equality to isomorphism, in the current
construction we only have to show equalities of substitutions and terms, which
we do not need to weaken (and they cannot be sensibly weakened anyway).
\end{definition}

\begin{theorem}[Initiality of term algebras]
For each $j$ and $k$ such that $j + 1 \leq k$, and $\ulOmega : \SynSig_j$, the
term algebra given by $\llb\ulOmega\rrb_{\bT_{\ulOmega}}\,\id$ is initial at
level $k$.
\end{theorem}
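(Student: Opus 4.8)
The plan is to show that the term algebra $\llb\ulOmega\rrb_{\bT_{\ulOmega}}\,\id : \llb\ulOmega\rrb_{\bA_k}$ supports induction at level $k$, and then invoke Theorem \ref{thm:initialind} to conclude initiality. So the statement reduces to proving that for every displayed $\ulOmega$-algebra $\gamma : \llb\ulOmega\rrb_{\bD_k}\,(\llb\ulOmega\rrb_{\bT_{\ulOmega}}\,\id)$, there is a section. This is exactly the data that the displayed model $\bI_{\ulOmega}$ was built to produce: applying the induction assumption for $\syn_j$ (lifted to $\ToS_{k,j}$ via Lemma \ref{lemma:cumulativity}) to the displayed model $\bI_{\ulOmega}$ yields a $\ToS$-section $\llb\blank\rrb_{\bI_{\ulOmega}}$ of $\bI_{\ulOmega}$.

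First I would evaluate this section at $\ulOmega$ itself, obtaining $\llb\ulOmega\rrb_{\bI_{\ulOmega}} : \Con_{\bI_{\ulOmega}}\,\ulOmega$, which by definition is a function $(\ulnu : \Sub\,\ulOmega\,\ulOmega)\ra \llb\ulOmega\rrb_{\bS}\,(\llb\ulnu\rrb_{\bA}\,(\llb\ulOmega\rrb_{\bT_{\ulOmega}}\,\id))\,\gamma$. Then I would instantiate $\ulnu := \id$, so that $\llb\id\rrb_{\bA}$ acts as the identity and the result lands in $\llb\ulOmega\rrb_{\bS}\,(\llb\ulOmega\rrb_{\bT_{\ulOmega}}\,\id)\,\gamma$, which is precisely the type of a section of the displayed algebra $\gamma$ over the term algebra. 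Since $\gamma$ was arbitrary, this establishes $\Inductive$ for the term algebra at level $k$, and Theorem \ref{thm:initialind} upgrades this to initiality. The proof-irrelevance clause of that theorem also guarantees the section is the unique one, matching initiality.

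The main obstacle is verifying that $\bI_{\ulOmega}$ is in fact a well-defined displayed $\ToS$-model, i.e.\ that each $\ToS$ component (the CwF operations, $\Pii$, $\Pie$, $\Piinf$, $\Id$, $\U$, $\El$, and all the substitution and $\beta\eta$ equations) can be interpreted coherently over $\syn_j$, with the underlying sets as specified. The delicate point is the interplay between the term-algebra isomorphisms and the section equations: the definitions in $\Sub_{\bI_{\ulOmega}}$ and $\Tm_{\bI_{\ulOmega}}$ are stated as strict equalities of substitutions and terms, so one must check that these equalities are preserved by every former. As the paper notes, this is inherited essentially unchanged from \cite{kaposi2019constructing}, because although the term-algebra sorts use isomorphisms (forced by $\Piinf$ and $\Id$), the section model $\bI_{\ulOmega}$ only asserts equalities that need not be weakened. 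Hence the real work is routine unfolding against the definitions of $\bT_{\ulOmega}$, $\bA$, $\bD$, $\bS$ and the coherence laws of iso-cleaving, rather than any new conceptual difficulty.

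The remaining care is bookkeeping of universe levels: I would confirm that the induction assumption applies at level $k$ (which holds since $j+1 \leq k$), that the cumulativity model of the previous subsection legitimately lifts the term algebra from level $j+1$ to level $k$ so that $\gamma$ and the term algebra live at compatible levels, and that the displayed model $\bI_{\ulOmega}$ is itself a displayed $\ToS_{k,j}$-model over $\syn_j$ lifted to $\ToS_{k,j}$. Once these level constraints are discharged, the section extraction described above goes through directly, completing the proof.
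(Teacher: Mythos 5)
Your proposal is correct and follows exactly the paper's own argument: interpret $\ulOmega$ in the displayed model $\bI_{\ulOmega}$ via the induction assumption on $\syn_j$, instantiate at $\id$ to obtain an element of $\llb\ulOmega\rrb_{\bS}\,(\llb\ulOmega\rrb_{\bT_{\ulOmega}}\,\id)\,\gamma$, and conclude by Theorem \ref{thm:initialind}. Your additional remarks on the well-definedness of $\bI_{\ulOmega}$ and on level bookkeeping are exactly the points the paper delegates to the preceding subsections, so nothing is missing.
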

\begin{proof}
For each $\gamma :
\llb\ulOmega\rrb_{\bD_k}\,(\llb\ulOmega\rrb_{\bT_{\ulOmega}}\,\id)$, we have
$\llb\ulOmega\rrb_{\bI_{\ulOmega}}\,\id :
\llb\ulGamma\rrb_{\bS}\,(\llb\ulOmega\rrb_{\bT_{\ulOmega}}\,\id)\,\gamma$. Hence,
term algebras are inductive in the sense of Definition \ref{def:induction}, and
by Theorem \ref{thm:initialind} they are also initial.
\end{proof}

\section{Related Work}
\label{sec:relatedwork}

Cartmell \cite{gat} defines generalized algebraic theories (GATs) using
type-theoretic syntax. Compared to our QII signatures, he supports infinite
signatures and sort equations but does not cover infinitary constructors or
recursive equations. A way to encode sort equations in our system is using
isomorphisms instead of equalities. In contrast to our algebraic definition,
Cartmell's signatures are given by presyntax, named variables and typing
relations, there is no explicit model theory provided for signatures, and no
explicit term model construction is given. Cartmell focuses instead on showing
that contextual categories serve as classifying categories for GATs.

A more semantic approach to QIITs is given by Altenkirch et
al.\ \cite{qiits}. They generalize the initial algebra semantics of inductive
types to QIITs by considering towers of functors and building complete
categories of algebras from them. Their notion of signature does not enforce
strict positivity, hence describes a larger class of QII signatures. They show
equivalence of initiality and induction, but the lack of a positivity
restriction prevents construction of initial algebras.

The work of Kaposi et al. \cite{kaposi2019constructing} is the direct precursor
of our work. They do not consider infinitary constructors or constructors with
recursive equations, which makes their semantics considerably simpler. They also
do not provide a model theory of signatures, instead they assume signatures as
an ad-hoc QIIT.

Higher inductive types (HITs) are generalizations of QIITs in settings with
proof-relevant identity types. They were introduced before QIITs
\cite{HoTTbook}. \cite{hiit} describes a syntax for higher inductive-inductive
types using a theory of signatures similar to ours, but it does not construct
categories of algebras and initial algebras. Semantics for different subclasses
of HITs are given by
\cite{lumsdaineShulman,10.1145/3209108.3209130,nielsmsc,sojakova,moeneclaey}. Cubical
type theories were shown to support some HITs in a computational way
\cite{cubicalhits,10.1145/3290314}.

Our notion of displayed CwF is an extension of displayed categories
\cite{displayedcats}, although in a setting with UIP.

\section{Conclusions and Further Work}
\label{sec:conclusion}

An important motivation of the current work was to use QIITs as a framework for
algebraic theories, with the metatheory of type theories in mind as a key
application. We would prefer QIITs to
\begin{itemize}
\item Be formally precise.
\item Not gloss over issues of size.
\item Be rich enough to cover most type theories in the wild, including the theory
      of QIIT signatures.
\item Be direct enough, so that signatures for type theories can
      be written out without excessive encoding overhead.
\item Be suitable for practical implementation in proof assistants.
\item Be reducible to a minimal set of basic type formers.
\end{itemize}

With the current work, we have improved the state of QIITs with respect to
the above criteria. However, a number of open research problems remain.

With regards to the expressiveness of QIIT signatures, we do not yet
support sort equations, i.e. equations of elements of $\Tm\,\Gamma\,\U$ in
signatures. Sort equations are included in Cartmell's generalized algebraic
theories \cite{gat}, and they appear to be highly useful for giving an algebraic
representation for Russell-style universes and cumulative universes
\cite{sterling2019algebraic}. We leave this for future work, but we note that
the current isofibration-based semantics does not work in the presence of strict
sort equations, since they are not invariant under isomorphism; instead, sort
equations are compatible with the stricter semantics of
\cite{kaposi2019constructing}.

While we have made an effort to shape the syntax and semantics of QIITs to
be amenable to implementation in proof assistants, much needs to be done before
we can have a practical implementation. For one, we would need to consider QIITs
in a type theory where transports along equality proofs compute, and would need
to work out computing transports for QIITs. Cubical Agda has recently made
strides in implementing HITs \cite{vezzosicubical}, but as of now it does not
support computing transports on indexed inductive types.

With regards to the reduction of QIITs to simple type formers, the reduction of
infinitary QIITs appears to be more challenging than the finitary
case. \cite[Section 9]{lumsdaineShulman} shows that infinitary QIITs are not
constructible from inductive types and simple quotients with relations. In the
finitary case, a generalization of the approach in \cite{induction-is-enough}
seems promising; this amounts to a Streicher-style initial algebra construction
\cite{streicher2012semantics} for the theory of finitary QIIT signatures. In
particular, Brunerie et al. \cite{brunerie} have formalized in Agda this
construction for a comparable type theory, using UIP, function extensionality,
propositional extensionality and simple quotient types.

Another line of possible future work would be to explore a more general
functorial style of semantics for QIITs. So far, we considered set-based
1-categorical semantics, which is what we need when we want to reason
inductively about syntaxes of type theories. However, it would be fruitful to
consider algebras in structured categories other than the category of sets.

\paragraph{Acknowledgments.}The first author was supported by the European Union,
co-financed by the European Social Fund (EFOP-3.6.3-VEKOP-16-2017-00002). The
second author was supported by the National Research, Development and Innovation
Fund of Hungary, financed under the Thematic Excellence Programme funding
scheme, Project no. ED18-1-2019-0030 (Application-specific highly reliable IT
solutions), by the New National Excellence Program of the Ministry for
Innovation and Technology, Project no. \'UNKP-19-4-ELTE-874, and by the Bolyai
Fellowship of the Hungarian Academy of Sciences, Project no. BO/00659/19/3.

\bibliography{references}

\end{document}